\documentclass[aps,pra,onecolumn,11pt,superscriptaddress]{revtex4-2}

\usepackage{silence}
\usepackage{graphicx} %
\usepackage{amsmath, amsthm, amssymb, mathtools}%
\usepackage{hyperref}
\usepackage[nameinlink]{cleveref}%
\usepackage[shortlabels]{enumitem}
\usepackage[a4paper, margin=0.75in]{geometry}%
\usepackage{multirow}
\usepackage{xcolor} %

\usepackage{thmtools}
\declaretheorem[numberwithin=section]{theorem}
\declaretheorem[sibling=theorem]{
    corollary,
    conjecture,
    proposition,
    lemma,
    definition,
    example,
    remark
}

\DeclarePairedDelimiter\ket{\lvert}{\rangle}
\DeclarePairedDelimiterX\braket[2]{\langle}{\rangle}{#1 \delimsize\vert #2}
\DeclarePairedDelimiterX\ketbra[2]{\delimsize\vert}{\delimsize\vert}{#1 \rangle \langle #2}
\newcommand{\N}{\mathbb{N}}
\newcommand{\Z}{\mathbb{Z}}

\newcommand{\R}{\mathbb{R}}

\newcommand{\Q}{\mathbb{Q}}
\newcommand{\C}{\mathbb{C}}

\newcommand{\inv}{^{-1}}

\newcommand{\dagg}{^\dag}

\newcommand{\kz}{\ket{0}}
\newcommand{\ko}{\ket{1}}

\newcommand{\kp}{\ket{+}}
\newcommand{\km}{\ket{-}}

\newcommand{\kpsi}{\ket{\psi}}

\newcommand{\onesqrttwo}{\frac{1}{\sqrt{2}}}
\newcommand{\onesqrtn}{\frac{1}{\sqrt{n}}}
\newcommand{\onesqrtm}{\frac{1}{\sqrt{m}}}
\newcommand{\onesqrtq}{\frac{1}{\sqrt{q}}}

\newcommand{\allones}{\mathbf{1}}
\newcommand{\allonesn}{\allones_n}
\newcommand{\allonesm}{\allones_m}
\newcommand{\allonesq}{\allones_q}

\newcommand{\given}{\;\middle|\;}
\newcommand{\mufz}{\{0, \pm1, \pm i\}}%
\newcommand{\omegvec}{\binom{\omega_1}{\omega_2}}
\newcommand{\omegvecvavbspace}{\begin{pmatrix} \omega_1 V_A\\[1mm] \omega_2 V_B \end{pmatrix}}
\newcommand{\omegvecvavb}{\begin{pmatrix} \omega_1 V_A\\ \omega_2 V_B \end{pmatrix}}
\newcommand{\omegvecvavbT}{\begin{pmatrix} \omega_1 V_A,\, \omega_2 V_B \end{pmatrix}^T}

\newcommand{\mcsym}{M_{\mathrm{csym}}}
\newcommand{\mrsym}{M_{\mathrm{rsym}}}
\newcommand{\rasym}{R_{\mathrm{asym}}}
\newcommand{\subspace}{\mathcal S}

\usepackage{xspace}

\DeclareMathOperator{\sign}{sgn}
\DeclareMathOperator{\tr}{Tr}%
\DeclareMathOperator{\Span}{span}%
 
\begin{document}

\title{Universal Complex Quantum-Like Bits from Hermitian Weighted Graphs}

\author{Ethan Dickey}
\affiliation{Department of Computer Science, Purdue University, West Lafayette, Indiana 47907, USA}
\author{Sabre Kais}
\affiliation{Department of Electrical and Computer Engineering, North Carolina State University, Raleigh, North Carolina 27606, USA}
\affiliation{Department of Chemistry and Department of Physics, North Carolina State University, Raleigh, North Carolina 27606, USA}

\begin{abstract}
    We study when block-coupled regular graphs can realize prescribed complex quantum-like (QL) bit states as exact synchronized eigenstates. Two regular subgraphs $G_A$ and $G_B$ supply normalized all-ones eigenvectors $V_A$ and $V_B$, and algebraically regular bipartite couplings reduce the full graph-supported operator exactly to a $2\times 2$ effective block on $\subspace=\Span \{ \kz, \ko \}$. Within this reduction we prove that two natural symmetric complexifications are not universal for state synthesis under a real-spectrum requirement: complex symmetric coupling with real diagonal regularities forces the target computational basis amplitude ratio $r=\omega_2/\omega_1$, for $\kpsi = \omega_1\kz + \omega_2\ko$, to satisfy $r^2\in\R$, while real symmetric coupling with complex diagonal regularities forces $r+1/r\in\R$. Replacing complex symmetry by Hermitian coupling removes this phase obstruction. For any nonbasis target state, any prescribed real eigenvalue, and any prescribed nonzero signed spectral gap, a Hermitian weighted coupling realizes the target exactly. Additionally, an independently tuned directed-coupling model gives a second universality mechanism. We then pass from continuous effective parameters to finite weighted graphs with entries in $\mufz$ (the fourth roots of unity and zero), characterize the balanced discrete coupling lattice by perfect matchings, and show that exact discrete Hermitian realizations are dense in the synchronized pure-state space. These results give a state-synthesis universality taxonomy for complex QL-bits and identify Hermitian conjugate pairing as the robust structural mechanism that supports arbitrary complex amplitudes with real two-level spectra.

\end{abstract}
\maketitle

\section{Introduction}
Many large systems are too complicated to understand or control one component at a time. Their useful behavior, however, is often concentrated in a small number of collective modes, even when the underlying state space is enormous. Reduced-order modeling exploits this separation by replacing large dynamical models with compact descriptions \cite{quarteroni2014reduced,benner2015survey}, while graph signal processing uses graph-supported operators to describe collective structure in data on irregular domains \cite{shuman2013emerging,ortega2018graph}. In this paper, we treat the graph-supported operator itself as a design object. We ask whether it can be constructed so that prescribed collective modes exist exactly and serve as information-bearing states.

Quantum information yields a precise way to test this because its computational states are vectors in a complex Hilbert space. In particular, a pure qubit state is specified, up to a global phase, by the relative magnitude and phase of two amplitudes, meaning the existence of two distinguished modes in a network only captures part of the structure of the system. Representing the full qubit state space requires arbitrary complex superpositions within a persistent two-dimensional sector. Studying that requirement in a classical network lets us separate the geometry of the state space from the additional resources needed for gates, measurement, entanglement, and scalable quantum computation. Our goal here is to determine when a constrained network can realize every complex two-level state exactly. Throughout the paper, \textit{universal} refers only to this state-synthesis task. This notion does not imply a universal gate set or a quantum computational advantage.

Recent work on quantum-like (QL) states has shown that synchronized classical networks can contain exact low-dimensional sectors whose states admit a qubit-like description \cite{scholes2024quantumlike,amati2025quantumlikebits,scholes2025productstates}. In these constructions, collective modes supply the basis states, while graph regularities and couplings determine how those modes combine. Subsequent work gave an exact signed-graph construction for arbitrary real single QL-bit states \cite{dickey2025construction}. However, a general two-level quantum state also carries a relative phase. Complex weights can encode that phase, but they can also produce complex eigenvalues. If the graph-supported operator is used to generate continuous-time evolution, an imaginary part in an eigenvalue causes the corresponding mode to grow or decay rather than evolve only by a phase. Complex state synthesis and real spectral realization therefore must be addressed together.

Viewed computationally, this is a constrained synthesis problem for an exact low-dimensional representation inside a much larger structured system. Graph-supported operators already generate computationally meaningful dynamics in continuous-time quantum walks, state-transfer models, and graph-encoded computation \cite{farhi1998quantum,kempe2003quantum,bose2003communication,christandl2004perfect,christandl2005perfect,childs2009universal,godsil2012state}. Equitable partitions and quotient graphs (partitions for which every vertex in a given section has the same total, possibly weighted, adjacency into each other section, so that the adjacency action on section-constant vectors is represented exactly by a smaller quotient matrix) show how regularity can induce exact low-dimensional invariant dynamics inside a larger graph \cite{krovi2007quantum,bachman2012perfect,aguiar2018synchronization}, while structured inverse eigenvalue problems ask how an operator can be constructed to realize prescribed spectral data \cite{chu2002structured}. Our setting combines these viewpoints. The graph-supported operator must create an exact two-level sector, realize a prescribed state within it, and satisfy restrictions on symmetry and admissible edge weights. These constraints are enforced at the level of the full operator, so the target state is an exact eigenstate of that operator rather than only an eigenstate of an approximate effective model.

This common architecture is shown in \Cref{fig:graphandspectrum_adj_msubgraphNone}. Two regular subgraphs provide the collective modes, and a structured bipartite coupling determines how those modes interact. \Cref{sec:prelims} establishes the exact two-dimensional reduction used throughout the paper. Keeping this architecture fixed isolates the effect of coupling symmetry on expressivity.

Relaxing the setting to allow complex edge weights broadens the design space but also raises an important spectral issue. Non-Hermitian and complex-symmetric operators can possess real spectra, but the reality of those spectra is delicate and can depend on hidden structural constraints, antilinear symmetries, or particular parameter regimes \cite{bender1998real,bender2002complex,mostafazadeh2002pseudo,elganainy2018nonhermitian,ashida2020nonhermitian,bergholtz2021exceptional}. These symmetric extensions thus cannot be dismissed in advance. We study two such extensions. The first allows complex-symmetric coupling while keeping the diagonal regularities real; the second keeps the coupling real and allows complex diagonal regularities. We compare both with Hermitian conjugate coupling and with independently tunable directed coupling. Hermitian coupling guarantees a real spectrum, but the full comparison identifies the structural feature that removes the phase restriction. The symmetric models retain reciprocal coupling, the Hermitian model adds conjugate pairing, and the directed model allows the two coupling directions to vary independently. In physical terms, conjugate pairing states that reciprocal edges carry opposite phases, as for hopping amplitudes in a synthetic gauge field \cite{lieb1993fluxes,elganainy2018nonhermitian}, whereas the directed model corresponds to nonreciprocal coupling.

Our first results give exact obstructions for the two symmetric extensions. Let $r$ denote the ratio of the two basis amplitudes of a nonbasis target state. With complex-symmetric coupling and real diagonal regularities, a real two-level spectrum is possible exactly when $r^2\in\R$. With real symmetric coupling and complex diagonal regularities, it is possible exactly when $r+r^{-1}\in\R$. At $r=\pm i$, both symmetric models force the two synchronized eigenvalues to coincide, so these points admit only zero-gap realizations. Thus, each symmetric extension realizes a nontrivial family of complex states, but neither realizes all of them with a prescribed nonzero gap.

Hermitian conjugate coupling removes this phase obstruction. For every nonbasis target state, every prescribed real eigenvalue, and every prescribed nonzero signed separation from the other synchronized eigenvalue, we give an exact Hermitian construction. The synchronized sector is also a reducing subspace of the full operator, so states initialized in that sector do not mix with the remaining network modes. Independently tunable directed coupling provides a second universal construction, even when the two diagonal regularities are constrained to be equal. It achieves the same state-synthesis expressivity, but without the orthogonal decomposition and spectral guarantees supplied by Hermiticity.

The canonical one-qubit $H$- and $T$-type magic states provide familiar checkpoints for this classification
\cite{bravyi2005universal,veitch2013efficient,veitch2014resource,howard2014contextuality,bravyi2012lowoverhead,jones2013multilevel,bravyi2016trading,campbell2017roads,seddon2019quantifying}. The $H$ state is realizable in all four models, whereas the $T$ state is excluded only from the complex-symmetric model with real diagonal regularities. We use these states only as benchmarks for representational power; the present work does not construct Clifford operations, magic-state distillation, or fault-tolerant quantum computation.

The continuous results treat the effective graph parameters as tunable scalars. We therefore also ask which states can be realized by finite weighted graphs with edge alphabet $\mufz$. In the balanced Hermitian setting, we characterize the available complex coupling sums using disjoint perfect matchings. This gives an exact finite realization for every Gaussian-rational amplitude ratio. As the graph size varies, these exactly realizable states are dense, modulo global phase, in the projective synchronized pure-state space.

Together, these results isolate the sources of expressivity in complex QL-bit constructions. Regularity creates the exact low-dimensional sector. Conjugate pairing or independent directionality removes the restriction on relative phase. In the balanced Hermitian setting, the finite edge alphabet determines which continuous parameters admit exact finite realizations. Increasing the graph size cannot remove an obstruction imposed by the symmetry class itself. These results establish the state-synthesis layer of the QL-bit program and provide a foundation for studying graph-native transformations, multibit composition, perturbation robustness, measurement, and the boundary between quantum-like classical networks and quantum computation.

\begin{figure}[ht!]%
    \centering
    \includegraphics[width=\textwidth]{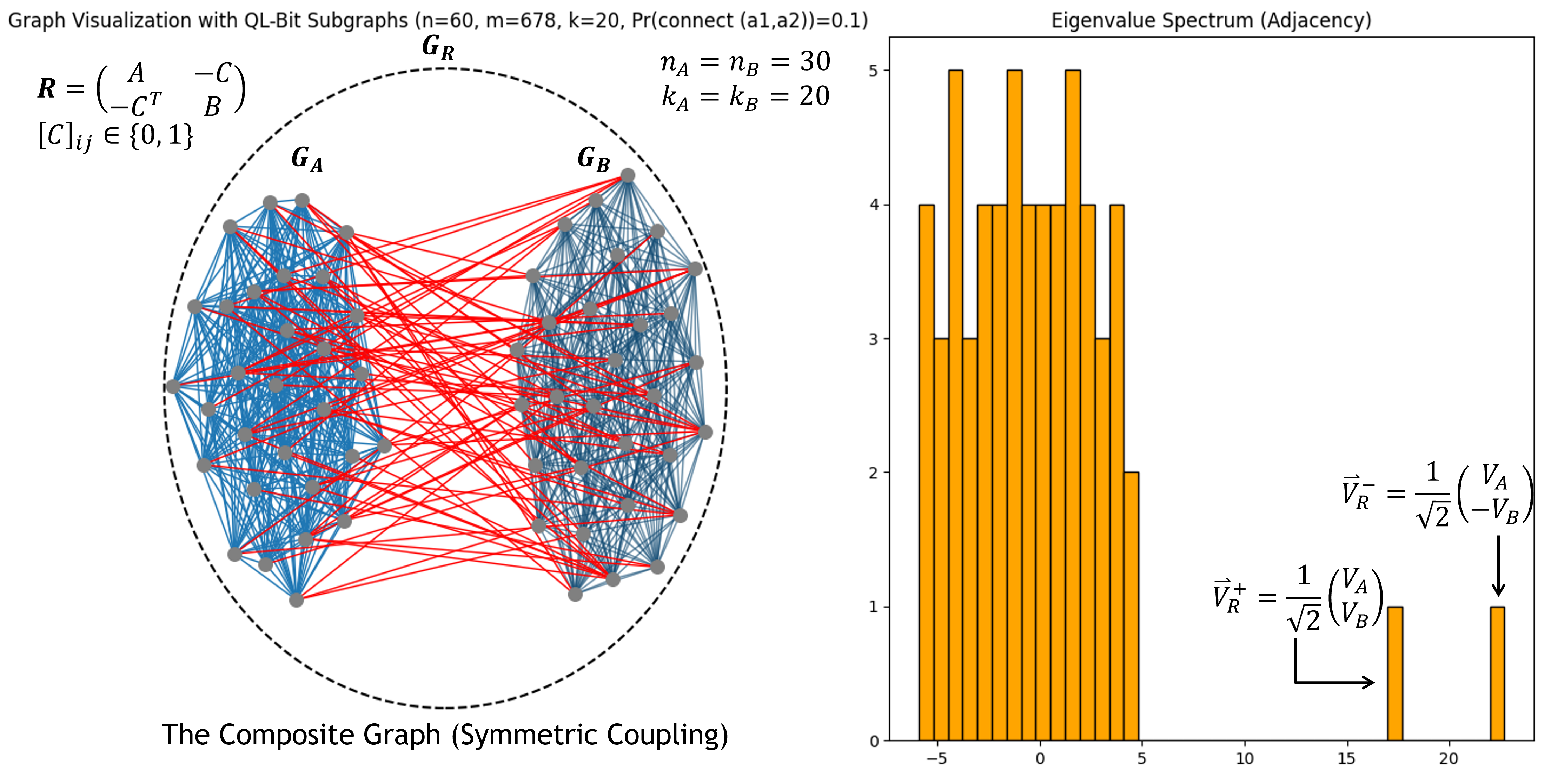}
    \caption{\it A QL-bit built from two regular subgraphs coupled by a bipartite block. The displayed real-symmetric example is the historical random benchmark of \cite{dickey2025construction}; the results below replace its approximate coupling regularity by exact algebraic regularity. Under the displayed sign convention, the symmetric and antisymmetric synchronized combinations $(V_A,V_B)^T/\sqrt{2}$ and $(V_A,-V_B)^T/\sqrt{2}$ have eigenvalues approximately $k\mp l$.}
    \label{fig:graphandspectrum_adj_msubgraphNone}
\end{figure}

\section{Preliminaries}\label{sec:prelims}
Following prior QL-bit work, we begin with two synchronized graph modes whose span supplies the two-level state space used throughout the paper. In our setting, these modes are supported on two coupled subgraphs $G_A$ and $G_B$. The regularity conditions below make their span invariant under the full graph-supported operator; spectral separation from the complementary modes is a separate condition needed when the synchronized eigenstates must be uniquely identifiable in the full spectrum. This section fixes the common graph-theoretic setup.

We work at the adjacency-matrix level. Let $G_A$ and $G_B$ have adjacency matrices $A$ and $B$, where $G_A$ is $k_A$-regular on $n$ vertices and $G_B$ is $k_B$-regular on $m$ vertices. For these real, nonnegative regular adjacency matrices, the normalized all-ones vector is a top eigenvector of each subgraph, and hence a Perron-Frobenius eigenvector \cite{pillai2005perron},
    \begin{equation}\label{eqn:vavb_eigvecs}
        V_A=\onesqrtn\,\allonesn,\qquad V_B=\onesqrtm\,\allonesm
    \end{equation}
with eigenvalues
    \begin{equation}\label{eqn:vavb_eigvals}
        AV_A=k_A V_A,\qquad BV_B=k_B V_B .
    \end{equation}
We regard $V_A$ and $V_B$ as the \emph{synchronized modes} of the two subgraphs. Here ``synchronized'' means that the amplitude is constant across all vertices of the corresponding subgraph. Regularity makes this block-constant vector invariant under the internal adjacency: because every row of $A$ has sum $k_A$, the action of $A$ on $V_A$ only rescales $V_A$, and similarly for $B$ and $V_B$. Thus the internal graph dynamics do not create variation within either subgraph when the system is initialized on these modes. Once the coupling block also has constant weighted row- and column-sums, the full operator maps block-constant vectors back into the two-dimensional span generated by $V_A$ and $V_B$. These vectors are the building blocks from which QL-bit states are assembled.

Embed these into the full space by defining
    \begin{equation}\label{eqn:synchronized_subspace}
        \kz \coloneqq \begin{pmatrix}V_A\\ \mathbf{0}_m\end{pmatrix},\qquad
        \ko \coloneqq \begin{pmatrix}\mathbf{0}_n\\ V_B\end{pmatrix},\qquad
        \subspace \coloneqq \Span\{\kz,\ko\}.
    \end{equation}
Equivalently, the Hadamard basis on $\subspace$ is
    \begin{equation}\label{eqn:kp_km}
        \psi_+\equiv\kp\coloneqq\onesqrttwo(\kz+\ko)=\onesqrttwo\begin{pmatrix}V_A\\ V_B\end{pmatrix},\qquad
        \psi_-\equiv\km\coloneqq\onesqrttwo(\kz-\ko)=\onesqrttwo\begin{pmatrix}V_A\\ -V_B\end{pmatrix}.
    \end{equation}
Thus any normalized synchronized state may be written as
    \[
        \psi=a\,\kp+b\,\km=\omega_1\kz+\omega_2\ko,\qquad
        |a|^2+|b|^2=|\omega_1|^2+|\omega_2|^2=1,
    \]
where
    \[ \omega_1=\onesqrttwo(a+b),\qquad \omega_2=\onesqrttwo(a-b). \]

For a nonbasis state, it is convenient to encode the relative magnitude and phase of its synchronized amplitudes by the projective ratio
    \begin{equation}\label{eqn:projective_ratio}
        r\coloneqq\frac{\omega_2}{\omega_1}\in\C^\times \coloneqq \C\setminus\{0\}.
    \end{equation}
Normalization then determines the state up to global phase:
    \begin{equation}\label{eqn:state_from_ratio}
        \psi=e^{i\theta}\frac{\kz+r\ko}{\sqrt{1+|r|^2}},\qquad \theta\in\R.
    \end{equation}
This expression is a reparameterization (not a transformation of the state, change of basis, or physical nonunitary operation): factoring out $\omega_1$ leaves the basis unchanged, while normalization fixes $|\omega_1|$ once $r$ is specified, and the phase of $\omega_1$ appears only as the global phase $e^{i\theta}$. The synthesis problem studied below is therefore to determine which choices of coupling and detuning make the state specified by a prescribed ratio $r$ an eigenvector of the full operator.

In the original real-valued QL-bit construction \cite{dickey2025construction}, the two subgraphs are coupled by a bipartite block $C$, giving
    \[ R_{\mathrm{real}}=\begin{pmatrix}A&C\\ C^T&B\end{pmatrix}. \]
The complex-weighted extensions considered below keep the same diagonal blocks but vary the relation between the off-diagonal blocks: \Cref{sec:complex_sym:sym_coupling_overview,sec:complex_sym:why_sym_not_work} use
    \[ \begin{pmatrix}A&-C\\ -C^T&B\end{pmatrix}, \]
\Cref{sec:complex_sym:hermitian} uses
    \[ H\coloneqq\begin{pmatrix}A&-C\\ -C\dagg&B\end{pmatrix}, \]
and \Cref{sec:complex_asym} uses
    \[ \rasym\coloneqq\begin{pmatrix}A&-C_A\\ -C_B&B\end{pmatrix}. \]
What matters in every case is the induced action on $\subspace$.

\subsection{Algebraic regularity of the coupling block}\label{sec:complex_sym:regularity}
For complex edge weights, combinatorial degree is no longer the right notion of regularity, because phase cancellations can make a combinatorially regular block act nonuniformly on the synchronized mode. The correct condition is constant weighted row- and column-sums.

We call $C\in\C^{\,n\times m}$ \emph{algebraically regular} if
\begin{equation}\label{eqn:alg_l_regular}
    C\allonesm=s_A\,\allonesn,\qquad C^T\allonesn=s_B\,\allonesm
\end{equation}
for some $s_A,s_B\in\C$. These scalars are not independent: both equations encode the total weighted sum of the entries of $C$, so necessarily
    \[ ns_A=ms_B. \]
Relative to \eqref{eqn:vavb_eigvecs}, this means that $C$ acts by a single effective scalar on the synchronized sector:
    \[ CV_B=l\,V_A,\qquad C^T V_A=l\,V_B,\qquad l\coloneqq s_A\sqrt{\frac{n}{m}}=s_B\sqrt{\frac{m}{n}}. \]
If the lower off-diagonal block is $C\dagg$ instead of $C^T$, then the same computation gives
    \[ C\dagg V_A=l^* V_B. \]
Thus algebraic regularity is precisely the condition that a coupling block preserve synchronization. In the discrete constructions of \Cref{sec:discrete_and_density}, writing $l=c+di$ records the net real and imaginary weight seen by the synchronized mode.

\subsection{Synchronized subspace and effective reduction}\label{sec:prelims:synchronized_reduction}
Once the diagonal blocks preserve $V_A$ and $V_B$, and the off-diagonal blocks act on them by scalar multiplication, the full $(n+m)$-dimensional problem closes on the two-dimensional space $\subspace$.

\begin{proposition}[General synchronized reduction]\label{prop:general_synchronized_reduction}
    Let
        \[ R\coloneqq\begin{pmatrix}A&-X\\ -Y&B\end{pmatrix}, \]
    where $A\in\C^{\,n\times n}$, $B\in\C^{m\times m}$, $X\in\C^{\,n\times m}$, and $Y\in\C^{m\times n}$. Assume
        \[ AV_A=k_A V_A,\qquad BV_B=k_B V_B,\qquad XV_B=l_A V_A,\qquad YV_A=l_B V_B \]
    for some $k_A,k_B,l_A,l_B\in\C$. Then $\subspace$ is $R$-invariant, and relative to the ordered basis $\{\kz,\ko\}$ the restriction $R|_{\subspace}$ is represented by
        \[ M_{\subspace}\coloneqq\begin{pmatrix}k_A&-l_A\\ -l_B&k_B\end{pmatrix}. \]
    Equivalently, for any $\omega_1,\omega_2\in\C$,
    \[
        R\omegvecvavb=
        \begin{pmatrix}
            (k_A\omega_1-l_A\omega_2)V_A\\
            (k_B\omega_2-l_B\omega_1)V_B
        \end{pmatrix}.
    \]
    In particular, $\omegvecvavb$ is an eigenvector of $R$ with eigenvalue $\lambda$ if and only if
        \[ (k_A-\lambda)\omega_1=l_A\omega_2,\qquad (k_B-\lambda)\omega_2=l_B\omega_1. \]

    \begin{proof}
        For any $\omega_1,\omega_2\in\C$,
        \[
            R\omegvecvavb=
            \begin{pmatrix}
                \omega_1 AV_A-\omega_2 XV_B\\
                -\omega_1 YV_A+\omega_2 BV_B
            \end{pmatrix}
            =
            \begin{pmatrix}
                (k_A\omega_1-l_A\omega_2)V_A\\
                (k_B\omega_2-l_B\omega_1)V_B
            \end{pmatrix}\in\subspace.
        \]
        This proves invariance and identifies the coordinate matrix of $R|_{\subspace}$. The final statement is exactly the eigenvalue equation for $M_{\subspace}$.
    \end{proof}
\end{proposition}

This proposition is the common engine behind the rest of the paper. In the symmetric settings, $X=C$ and $Y=C^T$, so algebraic regularity yields $l_A=l_B=l$. In the Hermitian setting, $X=C$ and $Y=C\dagg$, so $l_A=l$ and $l_B=l^*$. In the asymmetric settings, $X=C_A$ and $Y=C_B$, so $l_A$ and $l_B$ can be tuned independently; \Cref{sec:complex_asym} first imposes the common regularity $k_A=k_B$, while \Cref{sec:generalized_complex} subsequently allows independent diagonal detuning. The question in \Cref{sec:complex_sym,sec:complex_asym,sec:generalized_complex} is therefore not whether the reduction exists, but which target states in $\subspace$ and which real two-level spectra these different reduced blocks can realize.

The ratio $r$ is defined relative to the fixed ordered basis $\{\kz,\ko\}$ in \Cref{eqn:synchronized_subspace}. In the Hermitian reduction, a blockwise basis rephasing $\ko\mapsto e^{i\varphi}\ko$ transforms the coordinate ratio and effective coupling according to
    \[ r\mapsto e^{-i\varphi}r,\qquad l\mapsto e^{i\varphi}l, \]
while leaving $lr$ and the spectrum unchanged. In gain-graph language, this is diagonal-unitary switching \cite{reff2012spectral}. Accordingly, the realizability results compare the model classes in the fixed computational basis and under the stated reciprocity, symmetry, and edge-alphabet constraints. A general rephasing may move a realization outside one of those constrained classes.

\section{Complex Edge Weights in Symmetric Coupling and Detuning}\label{sec:complex_sym}
Extending the real-valued construction of \cite{dickey2025construction}, we now allow complex coupling and detuning parameters and determine which target states remain compatible with a real spectrum on the synchronized two-dimensional subspace. Throughout this section, $k_A$, $k_B$, and $l$ are treated as continuously tunable effective regularities under the stated structural restrictions. Explicit full weighted-operator lifts are collected in \Cref{rem:continuous_regularity_construction}; finite-alphabet and finite-graph constraints are imposed in \Cref{sec:discrete_and_density}.

\subsection{Symmetric coupling and the global block matrix.}\label{sec:complex_sym:sym_coupling_overview}
We consider two disjoint undirected regular subgraphs $G_A$ and $G_B$ with adjacency blocks $A$ and $B$ and allow \emph{detuning} $k_A\neq k_B$. We couple them via an undirected bipartite block $C$. To match the eigenvalue relations used throughout this work (cf.\ the form $\lambda_R = k_A - r l$), we absorb a global sign into the off-diagonal blocks and define the complex-symmetric matrix
\begin{equation}\label{eqn:R_def}
    R \coloneqq \begin{pmatrix} A & -C \\[1mm] -C^T & B \end{pmatrix}.
\end{equation}
Because $R^T=R$ but generally $R\dagg \neq R$, the model is complex symmetric but not Hermitian; we therefore do not invoke Hermitian spectral theorems in this section.

Applying \Cref{prop:general_synchronized_reduction} with $X=C$ and $Y=C^T$, the synchronized restriction of $R$ is
    \[ M_{\mathrm{sym}}= \begin{pmatrix} k_A & -l\\ -l & k_B \end{pmatrix}. \]
Thus, in the symmetric setting, the problem is not the existence of the synchronized reduction, but rather which target states and real spectra this reduced block can realize.

\subsection{Why symmetric complexifications are insufficient}\label{sec:complex_sym:why_sym_not_work}

We now separate two requirements for a viable QL-bit construction on the synchronized subspace. The first is \emph{state synthesis}: realizing a prescribed target state
    \[ \kpsi = \omegvecvavbspace, \qquad |\omega_1|^2+|\omega_2|^2=1. \]
The second is \emph{spectral reality and separation}: keeping the corresponding effective two-level spectrum real and separated.
The reality condition is essential because, given a matrix and its eigenvector
    \[ RV_R = \lambda_RV_R \,,\]
if we take $R$ as the generator of a continuous time evolution via
    \[ \psi(t) = e^{-itR}\psi(0) \,, \]
and take $\psi(0) = V_R$, then
    \[ \psi(t) = e^{-itR}V_R = e^{-it\lambda_R}V_R \,.\]
It is desirable for $V_R$ to only be rotated by a phase. However, if the corresponding eigenvalue is complex, say $\lambda_R = \alpha + \beta i$ with $\alpha, \beta \in \R$, we obtain
    \[ e^{-it\lambda_R}V_R = e^{-it\alpha}e^{t\beta}V_R \,. \]
The factor $e^{-it\alpha}$ gives the expected phase rotation, but the nonzero imaginary part $\beta$ produces exponential growth or decay of the model amplitude. Accordingly, throughout this subsection we require the relevant effective eigenvalues to be real, and, when possible, strictly separated by a positive gap.

In particular, the requirement that the relevant eigenvalues be real is not merely a formal convenience. In the present setting, the reduced $2\times2$ synchronized block is intended to encode a quantum-like two-level sector, so its eigenvalues play the role of effective frequencies or energies. Real eigenvalues therefore correspond to stationary oscillatory modes, while complex eigenvalues introduce built-in amplification or dissipation. From this perspective, allowing complex target amplitudes is not itself problematic; the issue is whether those amplitudes can be embedded into a reduced operator whose eigenmode dynamics remain phase-like rather than exponentially unstable. This is the reason for separating the problem of state synthesis from the problem of spectral realization: a construction may reproduce a desired eigenvector algebraically, yet still fail to provide a physically meaningful or stable two-level model if the associated spectrum is not real. Here, stability is meant in this limited spectral sense: real eigenvalues exclude exponential growth or decay of the corresponding eigenmodes. For the separated $2\times 2$ spectra considered below, diagonalizability is automatic and the evolution of arbitrary superpositions remains uniformly bounded in time, although exact norm preservation for all states in the standard inner product requires the Hermitian structure introduced in \Cref{sec:complex_sym:hermitian}.

For nonbasis states, i.e. $\omega_1\omega_2\neq 0$, we use the projective amplitude ratio $r=\omega_2/\omega_1\in\C^\times$ defined in \Cref{eqn:projective_ratio}. By \Cref{eqn:state_from_ratio}, this ratio uniquely determines the normalized state up to global phase. As in \Cref{sec:prelims:synchronized_reduction}, the block constructions considered below therefore reduce to $2\times 2$ scalar systems in the variable $r$. The next two propositions show that neither natural symmetric placement of the complex degrees of freedom yields arbitrary complex target states under a real-spectrum constraint.

\begin{proposition}[Obstruction for complex symmetric coupling with real detuning] \label{prop:symmetric_complex_coupling_obstruction}
    Fix $r\in\C^\times$ and consider the effective reduced matrix
    \begin{align}
        \mcsym \coloneqq
            \begin{pmatrix}
                k_A & -l\\
                -l & k_B
            \end{pmatrix}, \label{eqn:mCsym} \\
        k_A,k_B\in\R,\quad l\in\C^\times.
    \end{align}

    If $\omegvec$, with $r=\omega_2/\omega_1$, is an eigenvector of $\mcsym$ with a real eigenvalue, then %
        \begin{equation}\label{prop:symm_obstruction_r2}
            r^2\in\R.
        \end{equation}
    Equivalently, $r$ is either purely real or purely imaginary,
        \begin{equation}\label{prop:symm_obstruction_rrealcomplex}
            r\in(\R\cup i\R)\setminus \{0\} \,.
        \end{equation}

    \begin{proof}
        Let $\lambda\in\R$ be the eigenvalue corresponding to $\omegvec$. Then
            \[ \lambda = k_A-lr = k_B-\frac{l}{r}. \]
        Since $k_A,k_B,\lambda\in\R$, it follows that
            \[ lr = k_A-\lambda\in\R \quad \text{ and } \quad \frac{l}{r} = k_B-\lambda\in\R. \]
        As $l\neq 0$ and $r\neq 0$, dividing gives
            \[ r^2 = \frac{lr}{\,l/r\,}\in\R \,. \]
        The equivalence of \Cref{prop:symm_obstruction_r2,prop:symm_obstruction_rrealcomplex} is immediate: if $r^2\in\R$, then $r$ is either real or purely imaginary, and conversely both cases clearly satisfy $r^2\in\R$.
    \end{proof}

\end{proposition}

In \Cref{prop:symmetric_complex_coupling_obstruction}, only the eigenvalue attached to the target eigenvector is assumed real. Since $k_A,k_B\in\R$ make the trace of $\mcsym$ real, the second eigenvalue is then automatically real as well, so this hypothesis is equivalent to requiring a fully real spectrum, as assumed explicitly for the second symmetric model in \Cref{prop:real_coupling_complex_detuning_obstruction} below.

Before turning to the realization statement, it is convenient to separate two remaining spectral parameters once the target ratio $r$ is fixed: the location $\lambda\in\R$ of the distinguished eigenvalue, and the signed offset $\delta \in \R$ locating the second eigenvalue relative to it. For an isolated $2\times2$ sector, the absolute value of one eigenvalue is often not essential, since adding a scalar multiple of the identity shifts both eigenvalues by the same amount and contributes only a global phase to the evolution. However, keeping $\lambda$ explicit is stronger than normalizing it to, for example, $0$: it shows that the target synchronized mode can be placed at any prescribed real spectral location, not merely somewhere on the real axis, and that the second eigenvalue can then be placed at a prescribed signed offset. This is also the natural formulation when the reduced block is viewed as part of a larger operator, where the absolute spectral location may matter.

Now fix $r\in\C^\times$ satisfying $r^2\in\R$. The eigenvector relations
    \[ k_A-lr=\lambda, \quad k_B-\frac{l}{r}=\lambda \]
show that the family of admissible parameters has only one remaining real degree of freedom. Although $l$ a priori carries two real degrees of freedom, requiring $k_A,k_B\in\R$ forces $lr\in\R$, and then
    \[ \frac{l}{r}=\frac{lr}{r^2}\in\R \]
automatically because $r^2\in\R$. It is therefore natural to set
    \[ \tau \coloneqq lr \in \R, \quad\text{equivalently}\quad l=\frac{\tau}{r}, \]
which parametrizes the feasible set by its single intrinsic real parameter. In this parametrization,
    \begin{equation}\label{eqn:ka_kb_dfn_by_tau}
        k_A=\lambda+\tau, \qquad k_B=\lambda+\frac{\tau}{r^2}.
    \end{equation}

Let $\lambda_2$ denote the other eigenvalue of $\mcsym$ \eqref{eqn:mCsym}. Since the trace of any square matrix equals the sum of its eigenvalues (counted with algebraic multiplicity), we have
    \[ \tr(\mcsym)=\lambda+\lambda_2. \]
Therefore,
    \[ \lambda_2=\tr(\mcsym)-\lambda. \]
The trace is then
    \[ \tr(\mcsym) = k_A + k_B = 2\lambda+\tau(1+r^{-2}), \]
hence the second eigenvalue is
    \[ \lambda+\tau(1+r^{-2}). \]
Prescribing a signed offset $\delta$ is therefore equivalent to choosing
    \begin{equation}\label{eqn:sym:spec_gap_delta_dfn}
        \delta=\tau\bigl(1+r^{-2}\bigr).
    \end{equation}

The next proposition shows that this necessary condition is also sufficient: whenever $r^2\in\R$, the complex-symmetric model realizes the target ratio with an arbitrary prescribed real eigenvalue and, except in the degenerate case $r=\pm i$, an arbitrary prescribed signed offset.

\begin{proposition}[Realization for complex symmetric coupling with $r^2 \in \R$] \label{prop:symmetric_complex_coupling_realization}
    Fix $r\in\C^\times$ with $r^2\in\R$. Then there exist $k_A,k_B,\lambda\in\R$ and $l\in\C^\times$ such that
        \begin{equation*}
            \mcsym =
            \begin{pmatrix}
                k_A & -l\\
                -l & k_B
            \end{pmatrix}
        \end{equation*}
    has a real spectrum and satisfies
        \[ \mcsym\omegvec=\lambda\omegvec. \]
    Moreover, if $r\neq\pm i$, then for any prescribed target eigenvalue $\lambda\in\R$ and any prescribed nonzero signed offset $\delta\in\R^\times$, one may choose
    \[
        \tau \coloneqq \frac{\delta}{1+r^{-2}},
        \qquad l \coloneqq \frac{\tau}{r},
        \qquad k_A \coloneqq \lambda+\tau,
        \qquad k_B \coloneqq \lambda+\frac{\tau}{r^2},
    \]
    so that the eigenvalues are $\lambda$ and $\lambda+\delta$.
    \begin{proof}
        Choose any $\lambda\in\R$ and any $\tau\in\R^\times$. Since $r^2\in\R$, the quantities
        \[
            l \coloneqq \frac{\tau}{r},
            \qquad k_A \coloneqq \lambda+\tau,
            \qquad k_B \coloneqq \lambda+\frac{\tau}{r^2}
        \]
        satisfy $k_A,k_B\in\R$ and $l\in\C^\times$. A direct calculation gives
        \[
            k_A-lr = \lambda,
            \qquad k_B-\frac{l}{r} = \lambda,
        \]
        so $\omegvec$ is an eigenvector with eigenvalue $\lambda$.

        The trace equals
            \[ k_A+k_B = 2\lambda+\tau\left(1+r^{-2}\right), \]
        so the second eigenvalue is
            \[ \lambda+\tau\left(1+r^{-2}\right). \]
        If $r\neq\pm i$, then $1+r^{-2}\neq 0$, and choosing
            \[ \tau=\frac{\delta}{1+r^{-2}} \]
        gives second eigenvalue $\lambda+\delta$.
    \end{proof}
\end{proposition}

One can place the complex degree of freedom differently by keeping the coupling real and allowing the detunings to be complex, $l \in \R$ and $k_A,k_B \in \C$. This changes the form of the obstruction, but not the conclusion: the realizable set of quantum states remains a strict subset of $\C^\times$. The next proposition identifies the exact condition in this second symmetric model.

\begin{proposition}[Obstruction for real symmetric coupling with complex detuning] \label{prop:real_coupling_complex_detuning_obstruction}
    Let
    \begin{equation*}
        \mrsym \coloneqq
            \begin{pmatrix}
                k_A & -l\\
                -l & k_B
            \end{pmatrix},
        \quad l \in \R^\times, \quad k_A, k_B\in\C.
    \end{equation*}
    Assume that $\mrsym$ has real spectrum, and let $\omegvec$ be an eigenvector with $\omega_1\omega_2\neq 0$. Define
        \[ r \coloneqq \frac{\omega_2}{\omega_1}\in\C^\times. \]
    Then
        \[ r+\frac{1}{r}\in\R. \]
    Equivalently,
        \[ r\in\R \quad\text{or}\quad |r|=1. \]
    \begin{proof}
        Let $\lambda_1\in\R$ be the eigenvalue corresponding to $\omegvec$. Then
            \[ \lambda_1 = k_A-lr = k_B-\frac{l}{r} \,, \]
        so
            \[ k_A=\lambda_1+lr, \qquad k_B=\lambda_1+\frac{l}{r} \,. \]
        Relating the trace to the eigenvalues of a square matrix in the standard way,
            \[ \tr(\mrsym) = \lambda_1 + \lambda_2 = k_A + k_B \,, \]
        the second eigenvalue is then
            \begin{equation}\label{prop:r_coup_com_det_lambda2_via_trace}
                \lambda_2=(k_A+k_B)-\lambda_1 =\lambda_1+l\left(r+\frac{1}{r}\right).
            \end{equation}
        Since the spectrum is assumed real and $l\in\R^\times$, it follows that
            \[ r+\frac{1}{r}\in\R. \]
        
        Now write $r=\rho e^{i\phi}$ with $\rho>0$. Then
            \[ r+\frac{1}{r} = \left(\rho+\rho\inv\right)\cos\phi + i\left(\rho-\rho\inv\right)\sin\phi. \]
        This is real if and only if
            \[ \left(\rho-\rho\inv\right)\sin\phi=0, \]
        i.e.\ if and only if either $\rho=1$ or $\sin\phi=0$. Equivalently, either $|r|=1$ or $r\in\R$.
    \end{proof}
\end{proposition}

\paragraph{Parameterization.}
Unlike the complex coupling case (\Cref{prop:symmetric_complex_coupling_realization}), where a real auxiliary parameter $\tau$ is introduced to enforce the constraints $k_A,k_B\in\R$, the present setting imposes the realness condition directly on the coupling $l \in \R$. As a result, it is more natural to work directly with $l$. Introducing an auxiliary parameter analogous to $\tau$ would amount to a redundant reparameterization and does not simplify the algebra.

\medskip

The next proposition shows that the obstruction of \Cref{prop:real_coupling_complex_detuning_obstruction}  is also sufficient. Thus, within the real-coupling symmetric model, the exact realizable set of states is
    \[ \left\{r\in\C^\times : r+\frac{1}{r}\in\R\right\}, \]
and, away from the degenerate case $r=\pm i$, the second eigenvalue may again be placed at an arbitrary prescribed signed offset.

\begin{proposition}[Realization for real symmetric coupling with complex detuning] \label{prop:real_coupling_complex_detuning_realization}
    Fix $r\in\C^\times$ such that $r+\frac{1}{r}\in\R.$ Then there exist $l\in\R^\times$, $k_A,k_B\in\C$, and a real eigenvalue $\lambda\in\R$ such that
        \begin{equation*}
            \mrsym = \begin{pmatrix}
                k_A & -l\\
                -l & k_B
            \end{pmatrix}
        \end{equation*}
    has a real spectrum and satisfies
        \[ \mrsym\omegvec=\lambda\omegvec . \]

    Moreover, if $r+\frac{1}{r}\neq 0$ (equivalently $r\neq\pm i$), then for any prescribed target eigenvalue $\lambda\in\R$ and any prescribed nonzero signed offset $\delta \in \R^\times$, one may choose
    \begin{equation*}
        l \coloneqq \frac{\delta}{\,r+\frac{1}{r}\,},
        \qquad k_A \coloneqq \lambda+lr,
        \qquad k_B \coloneqq \lambda+\frac{l}{r},
    \end{equation*}
    so that the eigenvalues are $\lambda$ and $\lambda+\delta$.
    
    \begin{proof}
        Choose any $\lambda \in \R$.
        
        First suppose $r+\frac{1}{r}\neq 0$. Set
        \[
            l \coloneqq \frac{\delta}{r+\frac{1}{r}}\in\R, \qquad
            k_A \coloneqq \lambda+lr, \qquad
            k_B \coloneqq \lambda+\frac{l}{r}.
        \]
        Since $\delta \in \R^\times$ and $r+\frac{1}{r}\in\R^\times$, we have $l\in\R^\times$. Moreover,
            \[ k_A-lr=\lambda, \qquad k_B-\frac{l}{r}=\lambda, \]
        so $\omegvec$ is an eigenvector with eigenvalue $\lambda$, and the second eigenvalue (as in \Cref{prop:r_coup_com_det_lambda2_via_trace}) is
        \[
            (k_A+k_B)-\lambda
            =\lambda+l\left(r+\frac{1}{r}\right)
            =\lambda+\delta\in\R.
        \]
        Hence the spectrum is real.

        If instead $r+\frac{1}{r}=0$ (equivalently $r=\pm i$), choose any $l\in\R^\times$ and define
            \[ k_A\coloneqq \lambda+lr, \qquad k_B\coloneqq \lambda+\frac{l}{r}. \]
        Then $\omegvec$ is again an eigenvector with eigenvalue $\lambda$, and the second eigenvalue also equals $\lambda$, again giving that the spectrum is real. This is a zero-gap realization; it is not covered by the prescribed-offset assertion, since the trace identity forces $\delta=l\left(r+\frac{1}{r}\right)=0$.
    \end{proof}
\end{proposition}

\begin{corollary}[Non-universality of the two symmetric complexifications] \label{cor:nonuniversality_symmetric_models}
    Under a real synchronized-sector spectrum requirement, neither of the two symmetric complexifications above yields arbitrary complex QL-bit states.
    Their exact nonbasis ratio sets are
    \[
        \{r\in\C^\times : r^2\in\R\} = \R^\times\cup i\R^\times %
        \qquad\text{and}\qquad
        \left\{r\in\C^\times : r+\frac{1}{r}\in\R\right\} = \R^\times \cup\{r\in\C^\times:|r|=1\},
    \]
    respectively, whose union is a strict subset of $\C^\times$. In particular, the generic complex ratio $r=2e^{i\pi/4}$ is not realizable by either symmetric model.

    \begin{proof}
        Necessity for the first model follows from \Cref{prop:symmetric_complex_coupling_obstruction}, while sufficiency follows from \Cref{prop:symmetric_complex_coupling_realization}. Likewise, necessity and sufficiency for the second model follow from \Cref{prop:real_coupling_complex_detuning_obstruction,prop:real_coupling_complex_detuning_realization}. The displayed set identities are the equivalent characterizations established in the two obstruction propositions. Their union is a strict subset of $\C^\times$ because, for example, $r=2e^{i\pi/4}$ is neither real nor purely imaginary and does not satisfy $|r|=1$.
    \end{proof}
\end{corollary}

\begin{remark}[Pseudo-Hermitian structure and exceptional points]\label{rem:pseudo_hermitian_ep}
    The non-Hermitian portions of the constrained ratio sets in \Cref{cor:nonuniversality_symmetric_models} are exactly the real-spectrum (``unbroken'') regimes of two canonical pseudo-Hermitian $2\times2$ families \cite{bender1998real,mostafazadeh2002pseudo}. For the real-coupling model, every realization of a unit-modulus target ratio $r=e^{i\phi}$ satisfies $k_B-\lambda=(k_A-\lambda)^*$. Writing $k_{A,B}=a\pm i\gamma$, with $a=\lambda+l\cos\phi$ and $\gamma=l\sin\phi$, exhibits $\mrsym$ as a $\mathcal{PT}$-symmetric dimer with real coupling $l$ and balanced gain and loss $\pm\gamma$:
        \[ \sigma_x\,\overline{\mrsym}\,\sigma_x=\mrsym, \]
    where $\sigma_x$ and $\sigma_z$ denote the usual Pauli matrices and complex conjugation supplies the time-reversal operation \cite{bender1998real,elganainy2018nonhermitian}. The corresponding unbroken condition $l^2\geq\gamma^2$ holds automatically here, since $l^2-\gamma^2=l^2\cos^2\phi$, with equality exactly at $r=\pm i$.
    
    Dually, in the complex-coupling model, the purely imaginary branch $r\in i\R^\times$ forces $l\in i\R^\times$, and the reduced matrix satisfies
        \[ \sigma_z\,\mcsym\,\sigma_z=\mcsym\dagg. \]
    It is therefore pseudo-Hermitian with the indefinite metric $\eta=\sigma_z$ \cite{mostafazadeh2002pseudo}, with the reality boundary again occurring at $r=\pm i$. In both models, the real branch $r\in\R^\times$ reduces to the real-symmetric, hence Hermitian, case.

    At the boundary ratios every nonzero-coupling realization is defective. Indeed, at $r=i$,
    \[
        \mcsym-\lambda I=
        \begin{pmatrix}
            \tau & i\tau\\
            i\tau & -\tau
        \end{pmatrix}
        \quad (\tau\coloneqq lr\in\R^\times),
        \qquad
        \mrsym-\lambda I=
        \begin{pmatrix}
            il & -l\\
            -l & -il
        \end{pmatrix},
    \]
    and each matrix has rank one (its second row is $i$ times its first), so $\lambda$ has algebraic multiplicity two but geometric multiplicity one, and the unique eigendirection is the target $(1,i)^T$ itself; the computation at $r=-i$ is identical with $i\mapsto -i$. These degeneracies are second-order exceptional points \cite{elganainy2018nonhermitian,bergholtz2021exceptional}.
    By contrast, the Hermitian model of \Cref{sec:complex_sym:hermitian} corresponds to the positive metric $\eta=I$ and cannot exhibit exceptional points. More importantly for the synthesis problem, its conjugate off-diagonal structure replaces the phase-sensitive quantities above by the always-real quantity $|r|^2$, so no residual constraint on $r$ survives. The next subsection makes this mechanism explicit.
\end{remark}

\subsection{The Hermitian Remedy}\label{sec:complex_sym:hermitian}
The preceding obstructions arise because ordinary symmetric coupling ties both eigenvalue equations to the same off-diagonal parameter, leaving a residual constraint on the target ratio. We now replace this symmetric placement by Hermitian conjugate pairing. The two coupling directions are then related by $l$ and $l^*$, allowing the target phase to be absorbed into the operator while preserving a real two-level spectrum.

\begin{theorem}[Hermitian coupling realizes arbitrary complex nonbasis states with prescribed real spectral gap] \label{thm:hermitian_arbitrary_state}
    Interpret $k_A,k_B,$ and $l$ as continuously tunable effective parameters. Assume $A=A\dagg$, $B=B\dagg$, and $k_A,k_B\in\R$, and consider the Hermitian block matrix
    \begin{equation}\label{eqn:hermitian_block_matrix}
        H \coloneqq
        \begin{pmatrix}
            A & -C\\[1mm]
            -C\dagg & B
        \end{pmatrix}.
    \end{equation}    
    Assume
        \[
            AV_A = k_A V_A, \qquad
            BV_B = k_B V_B, \qquad
            CV_B = l\,V_A, \qquad
            C\dagg V_A = l^* \,V_B.
        \]
    Let
        \[
            \kpsi = \omegvecvavbspace, \qquad
            |\omega_1|^2+|\omega_2|^2=1, \qquad
            \omega_1\omega_2\neq 0,
        \]
    and define $r \coloneqq \omega_2/\omega_1 \in \C^\times.$ Then for any prescribed eigenvalue $\lambda \in \R$ and any prescribed nonzero signed offset $\delta \in \R^\times$, choosing
        \begin{equation}\label{eqn:hermitian_parameter_choice}
            \tau \coloneqq \frac{\delta}{1+|r|^{-2}}, \qquad
            l \coloneqq \frac{\tau}{r}, \qquad
            k_A \coloneqq \lambda+\tau, \qquad
            k_B \coloneqq \lambda+\frac{\tau}{|r|^2}
        \end{equation}
    makes $\kpsi$ an eigenvector of $H$ and $H|_{\subspace}$ with eigenvalue $\lambda$, and the other eigenvalue of $H|_{\subspace}$ equals $\lambda+\delta$.
    
    (Thus the target state may be chosen as either the lower or upper synchronized eigenstate by taking $\delta>0$ or $\delta<0$, respectively.)

    \begin{proof}
        By \Cref{prop:general_synchronized_reduction} applied with $X=C$ and $Y=C\dagg$, the synchronized subspace
            \[ \subspace=\Span\left\{ \begin{pmatrix}V_A\\0\end{pmatrix}, \begin{pmatrix}0\\V_B\end{pmatrix} \right\} \]
        is $H$-invariant, and
            \[ H|_\subspace= \begin{pmatrix} k_A & -l\\ -l^* & k_B \end{pmatrix}. \]
        Hence $\kpsi$ is an eigenvector with eigenvalue $\lambda$ if and only if
            \[ \lambda = k_A-lr \quad\text{and}\quad \lambda = k_B-\frac{l^*}{r} \,.\]
        With the prescribed choice of \eqref{eqn:hermitian_parameter_choice},
            \[ \tau, k_A, k_B \in \R, \qquad lr = \tau, \qquad \frac{l^*}{r} = \frac{\tau}{|r|^2}, \]
        so indeed
            \[
                k_A-lr = \lambda+\tau-\tau = \lambda, \qquad
                k_B-\frac{l^*}{r} = \lambda+\frac{\tau}{|r|^2}-\frac{\tau}{|r|^2} = \lambda.
            \]
        Thus $\kpsi$ is an eigenvector of $H$ with eigenvalue $\lambda$.
        
        The restriction of $H$ to $\subspace$ (\Cref{prop:general_synchronized_reduction}) is the $2\times 2$ Hermitian matrix
            \[
                \begin{pmatrix}
                    k_A & -l\\[1mm]
                    -l^* & k_B
                \end{pmatrix},
            \]
        so its eigenvalues are real. Since one eigenvalue is $\lambda$, the other is determined by the trace:
            \[ \lambda_2 = (k_A+k_B)-\lambda = \lambda + \tau + \frac{\tau}{|r|^2} = \lambda + \delta. \]
        This proves the claim.
    \end{proof}
\end{theorem}

\begin{remark}[Explicit continuous weighted-operator lifts]\label{rem:continuous_regularity_construction}
    In this continuous setting, $k_A$, $k_B$, and $l$ denote algebraic weighted row sums rather than necessarily nonnegative integer graph degrees, and all three 
    admit explicit full block realizations. For $n,m\geq 2$ and prescribed $k_A,k_B,l\in\C$, take
    \[
        A=\frac{k_A}{n-1}\left(\allonesn\allonesn^T-I_n\right),\qquad
        B=\frac{k_B}{m-1}\left(\allonesm\allonesm^T-I_m\right),\qquad
        C=lV_AV_B\dagg.
    \]
    The diagonal blocks are loopless and complex symmetric, while $C$ is a weighted complete bipartite coupling with entries $l/\sqrt{nm}$. These choices satisfy
        \[ AV_A=k_AV_A, \qquad BV_B=k_BV_B, \qquad CV_B=lV_A, \qquad C^TV_A=lV_B, \qquad C\dagg V_A=l^*V_B. \]
    Thus using $C^T$ or $C\dagg$ as the lower off-diagonal block realizes the complex-symmetric or Hermitian coupling relation, respectively. When $k_A,k_B\in\R$, the diagonal blocks are Hermitian. These weighted complete blocks are convenient witnesses; sparser algebraically regular blocks may be substituted when available.
\end{remark}

The ratio parametrization in \Cref{thm:hermitian_arbitrary_state} ($r=\omega_2/\omega_1$) is convenient for proving existence, but in the Hermitian case one can also write the required effective regularities directly in terms of the target amplitudes. The next lemma records this amplitude-space parameterization and shows that, once the target state and the two spectral parameters $\lambda$ and $\delta$ are fixed, the Hermitian reduced block is uniquely determined.

\begin{lemma}[Direct amplitude-space realization of the Hermitian effective block] \label{lem:hermitian_amplitude_realization}
    With the notation of \Cref{thm:hermitian_arbitrary_state}, write the target synchronized state on the reduced basis of $\subspace$ as
    \[
        \psi \coloneqq \omegvec, \qquad
        |\omega_1|^2+|\omega_2|^2=1, \qquad
        \omega_1\omega_2\neq 0,
    \]
    and define its orthogonal complement direction by
        \[ \psi_\perp \coloneqq \binom{-\omega_2^*}{\omega_1^*}. \]
    For prescribed $\lambda\in\R$ and $\delta\in\R^\times$, consider the Hermitian reduced block (\Cref{prop:general_synchronized_reduction})
    \[
        M_H \coloneqq
        \begin{pmatrix}
            k_A & -l\\[1mm]
            -l^* & k_B
        \end{pmatrix}.
    \]
    Then
        \[ M_H\psi=\lambda\psi, \qquad M_H\psi_\perp=(\lambda+\delta)\psi_\perp \]
    if and only if
    \begin{equation}\label{eqn:herm_kakbl_assignments}
        k_A=\lambda+\delta|\omega_2|^2, \qquad
        k_B=\lambda+\delta|\omega_1|^2, \qquad
        l=\delta\,\omega_1\omega_2^*.
    \end{equation}
    In particular, once the target amplitudes and the spectral gap $(\lambda,\delta)$ are fixed, the effective Hermitian parameters are uniquely determined.

    \begin{proof}
        Because $|\omega_1|^2+|\omega_2|^2=1$, both $\psi$ and $\psi_\perp$ have unit norm, and a direct computation shows $\braket{\psi}{\psi_\perp} = 0$; hence they are orthonormal. Writing out the eigenvector equations for each eigenvector gives
        \begin{equation}\label{eqn:herm_ka_orthog_eig_eqns}
            k_A\omega_1-l\omega_2=\lambda\omega_1, \qquad
            k_A\omega_2^*+l\omega_1^*=(\lambda+\delta)\omega_2^*,
        \end{equation}
        and
        \begin{equation}\label{eqn:herm_kb_orthog_eig_eqns}
            k_B\omega_2 - l^*\omega_1 = \lambda\omega_2, \qquad
            k_B\omega_1^* + l^*\omega_2^* = (\lambda+\delta)\omega_1^*,
        \end{equation}
        respectively. Multiply the first pair \eqref{eqn:herm_ka_orthog_eig_eqns} by $\omega_1^*$ and $\omega_2$, respectively, and add. The $l$-terms cancel, yielding
        \[
            k_A\bigl(|\omega_1|^2+|\omega_2|^2\bigr)
            = \lambda|\omega_1|^2+(\lambda+\delta)|\omega_2|^2,
        \]
        hence
            \[ k_A=\lambda+\delta|\omega_2|^2. \]
        Similarly, multiplying the second pair \eqref{eqn:herm_kb_orthog_eig_eqns} by $\omega_2^*$ and $\omega_1$ and adding eliminates $l^*$ and gives
            \[ k_B=\lambda+\delta|\omega_1|^2. \]
        Substituting this into the first eigenvector equation gives
        \[
            l=\frac{(k_A-\lambda)\omega_1}{\omega_2}
            = \delta|\omega_2|^2\frac{\omega_1}{\omega_2}
            = \delta\,\omega_1\omega_2^*.
        \]
        This proves necessity.
        
        Conversely, with these choices,
        \[
            k_A\omega_1-l\omega_2
            = (\lambda+\delta|\omega_2|^2)\omega_1 - \delta\,\omega_1\omega_2^*\,\omega_2
            = \lambda\omega_1,
        \]
        and
        \[
            -l^*\omega_1+k_B\omega_2
            = -\delta\,\omega_1^*\omega_2\,\omega_1 + (\lambda+\delta|\omega_1|^2)\omega_2
            = \lambda\omega_2,
        \]
        so $M_H\psi=\lambda\psi$. Also,
            \[ \tr(M_H)=k_A+k_B=2\lambda+\delta. \]
        Hence the other eigenvalue is $\lambda+\delta$ (as the trace also equals the sum of the eigenvalues). Since $M_H$ is Hermitian and $\delta\neq 0$, the corresponding eigenspace is orthogonal to $\psi$, namely $\Span\{\psi_\perp\}$. Therefore
            \[ M_H\psi_\perp=(\lambda+\delta)\psi_\perp. \]
        Uniqueness is immediate from the necessity argument.
    \end{proof}
\end{lemma}

Using $r=\omega_2/\omega_1$, this is equivalent to the parameter choice \eqref{eqn:hermitian_parameter_choice}. More importantly, it already reveals the basic Hermitian mechanism: the diagonal terms depend only on the population weights $|\omega_1|^2$ and $|\omega_2|^2$, while the off-diagonal coupling stores the relative phase through the product $\omega_1\omega_2^*$. We now explain why that mechanism removes the obstructions present in the two symmetric models.

\paragraph{Why the Hermitian construction works.}
The two symmetric models fail for the same structural reason: the same scalar $l$ appears in both off-diagonal positions, so the reduced eigenvalue equations involve the pair $lr$ and $l/r$ without complex conjugation. As a result, the phase of $r$ survives in the consistency conditions. In the complex-symmetric model with $k_A,k_B,\lambda\in\R$, requiring
    \[ \lambda = k_A-lr, \qquad \lambda = k_B-\frac{l}{r} \]
forces both $lr$ and $l/r$ to be real, hence $r^2\in\R$. In the real-coupling symmetric model with $l\in\R$, the second eigenvalue is
    \[ \lambda_2=\lambda+l\left(r+\frac{1}{r}\right), \]
so the real spectrum forces
    \[ r+\frac{1}{r}\in\R. \]
Thus, in both symmetric placements of the complex degree of freedom, the target phase encoded in $r$ is not fully absorbed by the matrix structure; instead it reappears as an additional reality constraint on $r$.

The Hermitian reduction
    \[
    M_H=\begin{pmatrix}
            k_A & -l\\
            -l^* & k_B
        \end{pmatrix}
    \]
changes exactly this feature. Its eigenvalue equations are
    \[ \lambda = k_A-lr, \qquad \lambda = k_B-\frac{l^*}{r}. \]
If one chooses
    \[ l=\frac{\tau}{r},\qquad \tau\in\R, \]
then
    \[ lr=\tau, \qquad \frac{l^*}{r}=\frac{\tau}{|r|^2}. \]
The phase of $r$ is therefore absorbed into the conjugate pair $l$ and $l^*$, and the consistency conditions depend only on $|r|^2$, not on $r^2$ or on $r+1/r$. Since $|r|^2\in\R_{>0}$ for every $r\in\C^\times$, no residual phase obstruction remains. Moreover, the spectral gap set to
    \[ \delta=\tau\left(1+|r|^{-2}\right) \]
is solvable for every prescribed $\delta\in\R^\times$, because
    \[ 1+|r|^{-2}>0. \]
At the scalar level, this is the key change: Hermitian symmetry replaces the problematic quantities $r^2$ and $r+1/r$ by the always-real quantity $|r|^2$.

Equivalently, on the synchronized two-level space, Hermitian symmetry allows the target state and its orthogonal complement to serve as an eigenbasis. The relative phase is then stored in a conjugate pair of off-diagonal entries, while the effective eigenvalues remain real by construction.

\smallskip

The preceding universality statements are formulated for nonbasis states because the ratio $r=\omega_2/\omega_1$ is only defined when both amplitudes are nonzero. The omitted endpoints are not difficult graph-theoretic cases; they are boundary points of the same two-level state space where the coupling between the two synchronized modes must vanish. We record this separately before turning from the reduced $2\times 2$ block to the full graph-supported operator.

\begin{remark}[Basis states]\label{rem:basis_states}
    The singular cases $\omega_1=0$ or $\omega_2=0$ correspond to basis states and lie outside the ratio parametrization $r=\omega_2/\omega_1$. In both the symmetric and Hermitian coupled models, these states occur exactly in the decoupled case $l=0$; with nonzero coupling they are obtained only as limits $|r|\to 0$ or $|r|\to\infty$.
\end{remark}

\paragraph{Further structural consequences.}
\Cref{thm:hermitian_arbitrary_state} concerns the synchronized restriction $H|_\subspace$. The next proposition and corollary show that, in the Hermitian setting, this restriction is not merely a computational device: it defines an exact orthogonally decoupled two-level subsystem of the full network. The proofs are deferred to \Cref{app:herm_structural_consequences_proofs}.

\begin{proposition}[The Hermitian construction defines an exact closed two-level subsystem] \label{prop:hermitian_closed_subsystem}
    Under the hypotheses of \Cref{thm:hermitian_arbitrary_state}, let
    \[
        \subspace:=\operatorname{span}\left\{
        \begin{pmatrix}V_A\\0\end{pmatrix},
        \begin{pmatrix}0\\V_B\end{pmatrix}
        \right\}.
    \]
    Then both $\subspace$ and its orthogonal complement
    \[
        \subspace^\perp =
        \left\{
        \begin{pmatrix}x\\y\end{pmatrix} :
        \braket{V_A}{x}=0,\ \braket{V_B}{y}=0
        \right\}
    \]
    are invariant under $H$. Equivalently, relative to the orthogonal decomposition
        \[ \C^{\,n+m}=\subspace\oplus \subspace^\perp, \]
    the Hermitian matrix $H$ is block diagonal:
        \[ H = H|_\subspace \oplus H|_{\subspace^\perp}. \]
    (In operator-theoretic language, $\subspace$ is a reducing subspace for $H$.)
\end{proposition}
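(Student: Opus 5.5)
Invariance of $\subspace$ under $H$ is already given by \Cref{prop:general_synchronized_reduction} (with $X=C$, $Y=C\dagg$), so the plan is to deduce invariance of $\subspace^\perp$ from Hermiticity of $H$. First I will check that $H$ is Hermitian: $A=A\dagg$ and $B=B\dagg$ by hypothesis, and the off-diagonal blocks $-C$ and $-C\dagg$ are conjugate transposes of each other, so $H\dagg=H$.

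Next comes the standard reducing-subspace argument. Take $w\in\subspace^\perp$ and any $s\in\subspace$. Then
\[
\langle s, Hw\rangle=\langle Hs, w\rangle=0,
\]
because $Hs\in\subspace$ by invariance. Hence $Hw\in\subspace^\perp$.

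I will also verify the explicit description of $\subspace^\perp$ stated in the proposition. A vector $(x,y)^T$ is orthogonal to both $(V_A,0)^T$ and $(0,V_B)^T$ exactly when $\braket{V_A}{x}=0$ and $\braket{V_B}{y}=0$. As an optional sanity check I will confirm invariance directly from the hypotheses. The first component of $H(x,y)^T$ is $Ax-Cy$, and
\[
\braket{V_A}{Ax-Cy}=\braket{AV_A}{x}-\braket{C\dagg V_A}{y}=k_A\braket{V_A}{x}-l^*\braket{V_B}{y}=0,
\]
using $k_A\in\R$. The same computation handles the second component.

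Block diagonality then follows: with respect to $\C^{n+m}=\subspace\oplus\subspace^\perp$, the off-diagonal blocks of $H$ vanish because each summand is mapped into itself. The only point needing care is that invariance of $\subspace^\perp$ uses the adjoint relation $C\dagg V_A=l^*V_B$ and the reality of $k_A,k_B$. This is exactly where the Hermitian structure is essential and where the symmetric model would fail. I do not expect any real obstacle.
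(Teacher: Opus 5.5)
Your proof is correct, but your main argument differs from the paper's. The paper never uses $H\dagg=H$ as a whole. Instead it takes $u=(x,y)^T\in\subspace^\perp$ and verifies $\braket{V_A}{Ax-Cy}=0$ and $\braket{V_B}{-C\dagg x+By}=0$ directly from $A=A\dagg$, $B=B\dagg$, $C\dagg V_A=l^*V_B$ and $CV_B=lV_A$. That is exactly your ``optional sanity check.'' Your primary route is the general fact that an invariant subspace of a self-adjoint operator is reducing, via $\langle s,Hw\rangle=\langle Hs,w\rangle=0$. This is shorter, and it shows that nothing beyond $H\dagg=H$ and $H\subspace\subseteq\subspace$ is needed, not even the explicit description of $\subspace^\perp$. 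The paper's computation instead exposes what is really being checked: the adjoint blocks also map the synchronized modes back into $\subspace$.

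Two of your side remarks need correcting. First, in your direct check the bra is conjugate-linear, so $\braket{C\dagg V_A}{y}=\braket{l^*V_B}{y}=l\,\braket{V_B}{y}$, not $l^*\braket{V_B}{y}$. This is harmless because the factor vanishes. For the same reason, the reality of $k_A,k_B$ is not a point needing care.

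Second, and more substantively, your claim that this is where the symmetric model would fail is incorrect. For the complex-symmetric $R$ one has $R\dagg=\overline{R}$. Since $\subspace$ is spanned by real vectors, $\overline{R}$ preserves $\subspace$ whenever $R$ does, so $\subspace^\perp$ is $R$-invariant there too. Concretely, $\sum_i(Cy)_i=s_B\sum_j y_j=0$ by the column-sum half of algebraic regularity. What Hermiticity genuinely adds is that both restricted blocks are themselves Hermitian. That gives real spectra and the unitary splitting of $e^{-itH}$ on which the paper's interpretation relies.
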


\begin{corollary}[The designed synchronized eigenpair is unique in the full operator] \label{cor:hermitian_unique_eigenpair}
    Under the hypotheses of \Cref{thm:hermitian_arbitrary_state}, suppose in addition that
        \[ \{\lambda,\lambda+\delta\}\cap \sigma(H|_{\subspace^\perp})=\varnothing, \]
    where $\sigma(\cdot)$ denotes the spectrum. Then $\lambda$ and $\lambda+\delta$ are simple eigenvalues of the full matrix $H$, and every eigenvector of $H$ with either of these eigenvalues lies entirely in $\subspace$.
\end{corollary}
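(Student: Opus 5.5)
The plan is to deduce the corollary from the orthogonal block decomposition of \Cref{prop:hermitian_closed_subsystem} together with the explicit description of $H|_\subspace$ in \Cref{thm:hermitian_arbitrary_state}.

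\emph{Spectrum of $H$.} First, I record that $H = H|_\subspace \oplus H|_{\subspace^\perp}$, with both summands Hermitian. Hence the characteristic polynomial of $H$ factors as the product of the characteristic polynomials of the two restrictions. The spectrum of $H$, with algebraic multiplicities, is therefore the multiset union $\sigma(H|_\subspace)\cup\sigma(H|_{\subspace^\perp})$.

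\emph{The restricted block.} By \Cref{thm:hermitian_arbitrary_state}, $H|_\subspace$ is a $2\times2$ Hermitian matrix with eigenvalues $\lambda$ and $\lambda+\delta$. Since $\delta\neq 0$, these are distinct, so each is simple for $H|_\subspace$. The disjointness hypothesis says neither value appears in $\sigma(H|_{\subspace^\perp})$. Consequently each of $\lambda$ and $\lambda+\delta$ has algebraic multiplicity exactly one in $H$. Because $H$ is Hermitian, geometric and algebraic multiplicities agree, so both are simple eigenvalues of $H$.

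\emph{Eigenvectors lie in $\subspace$.} Let $Hv=\mu v$ with $\mu\in\{\lambda,\lambda+\delta\}$, and decompose $v=v_\subspace+v_\perp$ along $\subspace\oplus\subspace^\perp$. By invariance of both summands, $Hv_\subspace\in\subspace$ and $Hv_\perp\in\subspace^\perp$. Comparing components in the eigen-equation gives $H|_{\subspace^\perp}v_\perp=\mu v_\perp$. Since $\mu\notin\sigma(H|_{\subspace^\perp})$, it follows that $v_\perp=0$, hence $v\in\subspace$. In particular, the $\lambda$-eigenspace is spanned by $\kpsi$ and the $(\lambda+\delta)$-eigenspace by the orthogonal synchronized direction.

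There is no real obstacle here. The only point needing care is the multiplicity bookkeeping, which relies on block-diagonality being with respect to an \emph{orthogonal} decomposition, so that the characteristic polynomial factors. This is exactly what \Cref{prop:hermitian_closed_subsystem} supplies, and it is the one place where Hermiticity, rather than mere invariance of $\subspace$, is used.
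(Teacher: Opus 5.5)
Your proof is correct and follows the paper's argument essentially step for step. It uses $\sigma(H)=\sigma(H|_\subspace)\cup\sigma(H|_{\subspace^\perp})$ from \Cref{prop:hermitian_closed_subsystem}, distinctness of $\lambda$ and $\lambda+\delta$ from $\delta\neq 0$, and the disjointness hypothesis to kill the $\subspace^\perp$-component of an eigenvector.

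Two small corrections to your closing remark, neither of which affects the proof. First, the characteristic-polynomial factorization needs only a direct sum of invariant subspaces, not orthogonality. Second, algebraic multiplicity one already forces geometric multiplicity one. Hermiticity is actually used upstream, to make $\subspace^\perp$ invariant in the first place.
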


\begin{remark}[An explicit continuous isolation family] \label{rem:continuous_full_isolation}
    For the loopless weighted blocks
        \[ A=\frac{k_A}{n-1} \left(\allonesn\allonesn^T-I_n\right), \qquad B=\frac{k_B}{m-1} \left(\allonesm\allonesm^T-I_m\right), \]
    together with $C=lV_AV_B\dagg$, the coupling vanishes on $\subspace^\perp$ and
        \[ H|_{\subspace^\perp} = -\frac{k_A}{n-1}I_{n-1} \oplus -\frac{k_B}{m-1}I_{m-1}. \]
    Thus the only possible complementary collisions with the designed eigenvalues are
        \[ -\frac{k_A}{n-1}\in\{\lambda,\lambda+\delta\}, \qquad -\frac{k_B}{m-1}\in\{\lambda,\lambda+\delta\}. \]
    For fixed target data, these equations exclude only finitely many choices of $n$ and $m$. Consequently, infinitely many block sizes yield simple designed eigenvalues in the full continuous weighted operator.
\end{remark}

\paragraph{Interpretation and practical implications.} \Cref{prop:hermitian_closed_subsystem} clarifies the status of the synchronized reduction. The two-component form $\omegvecvavbT$ is not merely a trial restriction to block-constant vectors. In the Hermitian model, $\subspace$ is a reducing subspace: both $\subspace$ and $\subspace^\perp$ are invariant. Equivalently, after choosing an orthonormal basis adapted to
    \[ \C^{\,n+m}=\subspace\oplus \subspace^\perp, \]
the full matrix $H$ becomes block diagonal, so the unitary time evolution splits as
    \[ e^{-itH}=e^{-itH|_\subspace}\oplus e^{-itH|_{\subspace^\perp}}. \]
Thus an initial state in the synchronized sector never leaks into nonsynchronized modes, and nonsynchronized modes never feed amplitude back into the synchronized sector. In this precise sense, the Hermitian construction embeds an exact QL-bit inside the larger graph, rather than providing only an approximate two-level description.

The corollary gives the corresponding spectral interpretation. If $H|_{\subspace^\perp}$ does not contain either designed eigenvalue $\lambda$ or $\lambda+\delta$, then the two synchronized eigenstates are uniquely identifiable in the full operator. This matters for numerical extraction, spectral addressing, and state preparation: the designed QL-bit eigenvectors cannot be confused with nonsynchronized eigenvectors at the same eigenvalue. This also means that one may enlarge or modify the rest of the network while preserving the intended synchronized QL-bit, provided no spectral collision with $\lambda$ or $\lambda+\delta$ is introduced in the complementary sector.
For example, a Hermitian perturbation
\[
    \Delta H=
    \begin{pmatrix}
        \Delta A & -\Delta C\\
        -\Delta C\dagg & \Delta B
    \end{pmatrix}
\]
leaves the synchronized block unchanged if
    \[ \Delta A V_A=0,\qquad \Delta B V_B=0,\qquad \Delta C V_B=0,\qquad \Delta C\dagg V_A=0. \]
These conditions say that the perturbation has zero action on the synchronized modes, or equivalently that it changes only the nonsynchronized degrees of freedom. Designing $H|_{\subspace^\perp}$ is therefore a separate structured spectral problem: local changes in $A$, $B$, or $C$ can move many complementary eigenvalues, and avoiding collisions with $\lambda$ and $\lambda+\delta$ generally requires either explicit spectral estimates or direct spectral verification for the chosen finite graph. \Cref{sec:discrete_and_density:worked_example} gives a minimal balanced instance in which both outcomes occur: an exact collision produced by phase cancellation on a complementary mode, and its removal by a different admissible choice of coupling matchings.

\section{Complex Edge Weights in Asymmetric Coupling and Detuning}\label{sec:complex_asym}
The Hermitian construction restores universality by pairing the off-diagonal entries by conjugation. A second route is to drop the symmetry requirement altogether and allow the two coupling directions to be tuned independently. We first isolate the genuinely asymmetric mechanism by keeping a common synchronized regularity on both sides, i.e. by imposing $k_A=k_B=k$ while allowing the two coupling blocks to differ. This shows that universality can already be recovered from directed coupling alone, without any diagonal detuning. We then pass to the fully generalized setting (\Cref{sec:generalized_complex}) in which the diagonal regularities are also allowed to differ, so that asymmetry and detuning are both present.

On the synchronized subspace, we replace the single effective coupling $l$ of \Cref{sec:prelims:synchronized_reduction} by two independent scalars $l_A$ and $l_B$ in the reduced $2\times 2$ block.

Throughout \Cref{sec:complex_asym,sec:generalized_complex}, $\lambda$ denotes the eigenvalue assigned to the target state, and the other synchronized eigenvalue is written as $\lambda+\delta$, where $\delta\in\R^{\times}$ is a signed offset. Thus $\delta>0$ makes the target state the lower synchronized eigenstate, while $\delta<0$ makes it the upper synchronized eigenstate. In the common-regularity parametrization of \Cref{thm:asymmetric_common_k}, $\tau=\delta/2$ and $k=\lambda+\tau$, so the two synchronized eigenvalues are
    \[ k-\tau=\lambda,\qquad k+\tau=\lambda+\delta . \]

\begin{theorem}[Asymmetric coupling with common regularity realizes arbitrary complex nonbasis states with prescribed real spectral gap]
\label{thm:asymmetric_common_k}
    Interpret $k$, $l_A$, and $l_B$ as continuously tunable effective parameters, and consider the block matrix
    \[
        \rasym \coloneqq
        \begin{pmatrix}
            A & -C_{A}\\[1mm]
            -C_{B} & B
        \end{pmatrix}.
    \]
    Assume
    \[
        AV_A = kV_A, \qquad
        BV_B = kV_B, \qquad
        C_{A}V_B = l_A V_A, \qquad
        C_{B}V_A = l_B V_B.
    \]
    Let
    \[
        \kpsi =
        \begin{pmatrix}
            \omega_1 V_A\\[1mm]
            \omega_2 V_B
        \end{pmatrix}, \qquad
        |\omega_1|^2+|\omega_2|^2=1, \qquad
        \omega_1\omega_2\neq 0,
    \]
    and define $r\coloneqq \omega_2/\omega_1\in\C^\times$. Then for any prescribed eigenvalue $\lambda\in\R$ and any prescribed nonzero signed offset $\delta\in\R^\times$, choosing
    \[
        \tau \coloneqq \frac{\delta}{2}, \qquad
        k \coloneqq \lambda+\tau, \qquad
        l_A \coloneqq \frac{\tau}{r}, \qquad
        l_B \coloneqq \tau r
    \]
    makes $\kpsi$ an eigenvector of $\rasym$ and $\rasym|_{\subspace}$ (\Cref{sec:prelims:synchronized_reduction}) with eigenvalue $\lambda$, and the other eigenvalue of $\rasym|_{\subspace}$ equals $\lambda+\delta$.

    \begin{proof}
        By \Cref{prop:general_synchronized_reduction} applied with $X=C_A$ and $Y=C_B$, the synchronized subspace
            \[ \subspace=\Span\left\{ \begin{pmatrix}V_A\\0\end{pmatrix}, \begin{pmatrix}0\\V_B\end{pmatrix} \right\} \]
        is $\rasym$-invariant, and
            \[ \rasym|_\subspace= \begin{pmatrix} k & -l_A\\ -l_B & k \end{pmatrix}. \]
        Hence $\kpsi$ is an eigenvector with eigenvalue $\lambda$ if and only if
            \[ \lambda = k - l_A r, \quad \text{and} \quad \lambda = k - \frac{l_B}{r}. \]
        With the prescribed choice,
        \[
            l_A r=\tau, \qquad
            \frac{l_B}{r}=\tau,
        \]
        so indeed
        \[
            k-l_A r = \lambda+\tau-\tau = \lambda, \qquad
            k-\frac{l_B}{r} = \lambda+\tau-\tau = \lambda.
        \]
        Thus $\kpsi$ is an eigenvector of $\rasym$ with eigenvalue $\lambda$.

        The trace of $\rasym|_\subspace$ is $2k=2\lambda+2\tau=2\lambda+\delta$. Since one eigenvalue is $\lambda$, the other is
        \[
            2k-\lambda
            = \lambda+\delta.
        \]
        This proves the claim.
    \end{proof}
\end{theorem}

Thus, imposing $k_A=k_B$ does not destroy universality so long as the common value $k$ remains tunable: the asymmetric couplings $l_A$ and $l_B$ encode the target ratio $r$, while the common diagonal level $k$ sets the spectral center.

\begin{remark}
    Similarly to \Cref{rem:continuous_regularity_construction}, in the asymmetric case, arbitrary effective couplings may be realized continuously by
        \[ C_A=l_A V_A V_B\dagg,\qquad C_B=l_B V_B V_A\dagg . \]
    
\end{remark}

\section{Generalized Construction and Unification Theorem}\label{sec:generalized_complex}

The preceding theorem shows that the essential gain from asymmetry already appears in the common-regularity case $k_A = k_B = k$: once the two coupling directions are allowed to differ, the phase obstruction present in the symmetric models disappears, because $l_A$ and $l_B$ can encode the target ratio $r \in \C^\times$ independently, while the common diagonal level $k$ fixes the spectral center of the synchronized $2 \times 2$ block. The generalized setting considered next does not enlarge the realizable set of states further; rather, it adds independent diagonal detuning $k_A \neq k_B$ on top of the asymmetric couplings, thereby separating more cleanly the state-synthesis role of the off-diagonal parameters from the spectral-placement role of the diagonal terms. We therefore now pass from the purely asymmetric common-regularity model to the fully generalized asymmetric-detuned reduction, in which both coupling directions and both diagonal levels are treated as independent effective parameters. We proceed using the subspace reduction derived in \Cref{prop:general_synchronized_reduction}.

\begin{theorem}[Generalized coupling realizes arbitrary complex nonbasis states with prescribed real spectral gap]\label{thm:generalized_asymmetric}
    Let
        \[
            \kpsi= \omegvecvavbspace, \qquad
            |\omega_1|^2+|\omega_2|^2=1, \qquad
            \omega_1\omega_2\neq 0,
        \]
    and define
        \[ r\coloneqq \frac{\omega_2}{\omega_1}\in\C^\times. \]
    Then for any prescribed eigenvalue $\lambda\in\R$ and any prescribed nonzero signed offset $\delta\in\R^\times$, one may choose any $\tau_A,\tau_B\in\R$ satisfying
        \[ \tau_A+\tau_B=\delta, \]
    and set
        \[
            l_A \coloneqq \frac{\tau_A}{r}, \qquad
            l_B \coloneqq \tau_B r, \qquad
            k_A \coloneqq \lambda+\tau_A, \qquad
            k_B \coloneqq \lambda+\tau_B.
        \]
    Assume that the blocks of $R_{\mathrm{asym}}$ realize these chosen parameters, i.e.,
        \[ AV_A=k_AV_A, \qquad BV_B=k_BV_B, \qquad C_AV_B=l_AV_A, \qquad C_BV_A=l_BV_B. \]
    Then $\kpsi$ is an eigenvector of $\rasym$ with eigenvalue $\lambda$, and the other eigenvalue of $\rasym|_\subspace$ equals $\lambda+\delta$.

    \begin{proof}
        By \Cref{prop:general_synchronized_reduction}, the synchronized subspace
        $\subspace$ is invariant under $R_{\mathrm{asym}}$, and
            \[ R_{\mathrm{asym}}|_{\subspace} = \begin{pmatrix} k_A & -l_A\\ -l_B & k_B \end{pmatrix}. \]
        A direct calculation gives
            \[ k_A-l_A r=\lambda, \qquad k_B-\frac{l_B}{r}=\lambda, \]
        so $\kpsi$ is an eigenvector with eigenvalue $\lambda$. The trace of the effective block is
            \[ k_A+k_B = 2\lambda+\tau_A+\tau_B = 2\lambda+\delta, \]
        so the second eigenvalue is
            \[ (k_A+k_B)-\lambda = \lambda+\delta. \]
    \end{proof}
\end{theorem}
The obstruction disappears here through a mechanism similar to but distinct from the Hermitian one. Hermitian coupling relates the two directions by complex conjugation, so the consistency conditions depend on $|r|^2$. In the asymmetric model, the two coupling directions are independent, and
    \[ l_A r=\tau_A, \qquad \frac{l_B}{r}=\tau_B \]
can instead be chosen separately. The target phase of $r$ therefore imposes no algebraic restriction. The tradeoff is that the effective block need not be Hermitian or even normal, so one gains universality without the orthogonal decomposition and spectral guarantees that accompany Hermitian symmetry.

\medskip
We collect the individual results of \Cref{sec:complex_sym:why_sym_not_work,sec:complex_sym:hermitian,sec:complex_asym} in the following theorem.

\begin{theorem}[Universality taxonomy of effective QL-bit reductions]
    Fix nonbasis target coefficients $\omega_1,\omega_2$ with $|\omega_1|^2+|\omega_2|^2=1$, and define $r \coloneqq \omega_2/\omega_1 \in \C^\times$. Let $\lambda\in\R$ and $\delta\in\R^\times$ be prescribed.

    For the reduced $2\times 2$ models:
    \begin{enumerate}[(i)]
        \item complex-symmetric coupling with $k_A,k_B\in\R$, $l\in\C^\times$,
        \item real coupling with $l\in\R^\times$, $k_A,k_B\in\C$,
        \item Hermitian coupling,
        \item asymmetric coupling, %
    \end{enumerate}
    the corresponding sets of realizable nonbasis ratios with prescribed real eigenvalue $\lambda$ and prescribed signed offset $\delta$ are, respectively,
    \begin{enumerate}[(i)]
        \item $\{r\in\C^\times:r^2\in\R,\ r\neq\pm i\} = (\R^\times\cup i\R^\times)\setminus\{\pm i\}$ (\Cref{prop:symmetric_complex_coupling_realization}),
        \item $\left\{r\in\C^\times:r+\frac{1}{r}\in\R,\ r\neq\pm i\right\} = \left(\R^\times\cup\{r\in\C^\times:|r|=1\}\right)\setminus\{\pm i\}$ (\Cref{prop:real_coupling_complex_detuning_realization}),
        \item $\C^\times$ (\Cref{thm:hermitian_arbitrary_state}),
        \item $\C^\times$ (\Cref{thm:asymmetric_common_k}).
    \end{enumerate}
    For the nonzero couplings assumed in the first two models, the exceptional points $r=\pm i$ admit only defective zero-gap realizations: the reduced matrix has a repeated eigenvalue with geometric multiplicity one, and the unique eigendirection is the target itself (\Cref{rem:pseudo_hermitian_ep}).

    \begin{proof}
        In cases (i) and (ii), necessity follows from \Cref{prop:symmetric_complex_coupling_obstruction,prop:real_coupling_complex_detuning_obstruction}, while sufficiency away from $r=\pm i$ follows from \Cref{prop:symmetric_complex_coupling_realization,prop:real_coupling_complex_detuning_realization}. Because $\delta\neq 0$, the ratios $r=\pm i$ are excluded: at those points both symmetric models force the synchronized eigenvalue offset to vanish. Their defectiveness is established in \Cref{rem:pseudo_hermitian_ep}. Cases (iii) and (iv) are \Cref{thm:hermitian_arbitrary_state,thm:asymmetric_common_k}, respectively.
    \end{proof}
\end{theorem}

\subsection{Magic State Realizability}
The universality taxonomy immediately recovers the status of the canonical one-qubit $H$- and $T$-type examples.

\begin{corollary}[Canonical $H$- and $T$-state examples]
\label{cor:canonical_magic_state_examples}
Fix a prescribed target eigenvalue $\lambda\in\R$ and a prescribed nonzero signed offset $\delta\in\R^\times$. Consider the reduced computational-basis states
\[
    \ket{H} \coloneqq \cos\!\left(\frac{\pi}{8}\right)\ket{0}+\sin\!\left(\frac{\pi}{8}\right)\ket{1},
    \qquad
    \ket{T} \coloneqq \frac{1}{\sqrt{2}}\bigl(\ket{0}+e^{i\pi/4}\ket{1}\bigr).
\]
Then:
\begin{enumerate}
    \item $\ket{H}$ is realizable in each of the four effective models of \Cref{sec:complex_sym,sec:complex_asym,sec:generalized_complex};
    \item $\ket{T}$ is realizable in the real-coupling symmetric model, the Hermitian model, and the asymmetric model, but not in the complex-symmetric model with real diagonal detuning.
\end{enumerate}
    \begin{proof}
        For $\ket{H}$, the amplitude ratio is
        \[
            r_H \coloneqq \frac{\sin(\pi/8)}{\cos(\pi/8)}
            = \tan\!\left(\frac{\pi}{8}\right)
            = \sqrt{2}-1 \in \R.
        \]
        Hence $r_H^2\in\R$ and $r_H + \frac{1}{r_H}\in\R$, with $r_H \neq \pm i$. Therefore $\ket{H}$ is realizable in the two symmetric models by \Cref{prop:symmetric_complex_coupling_realization,prop:real_coupling_complex_detuning_realization}, and in the Hermitian and asymmetric models by \Cref{thm:hermitian_arbitrary_state,thm:asymmetric_common_k}.

        For $\ket{T}$, the amplitude ratio is
        \[
            r_T \coloneqq e^{i\pi/4}.
        \]
        Then
        \[
            r_T^2 = i \notin \R,
            \qquad
            r_T+\frac{1}{r_T}
            = e^{i\pi/4}+e^{-i\pi/4}
            = 2\cos\frac{\pi}{4}
            = \sqrt{2}\in\R,
        \]
        and $r_T\neq\pm i$. Hence \Cref{prop:symmetric_complex_coupling_realization} excludes $\ket{T}$ from the complex-symmetric model with real diagonal detuning, while \Cref{prop:real_coupling_complex_detuning_realization} realizes it in the real-coupling symmetric model. The Hermitian and asymmetric realizations again follow from \Cref{thm:hermitian_arbitrary_state,thm:asymmetric_common_k}.
    \end{proof}
\end{corollary}

\section{Discrete Realizations and Density of Representation}\label{sec:discrete_and_density}
\Cref{sec:complex_sym,sec:complex_asym,sec:generalized_complex} are effective-parameter results: the reduced quantities $k_A,k_B,l$ (and, in the asymmetric setting, $l_A,l_B$) are treated as continuously tunable scalars. For actual finite weighted graphs, however, these parameters are discrete.
The goal of this section is therefore to identify the exact arithmetic row- and column-sum constraints within a balanced model and then show that, as the graph size grows, the resulting exact finite realizations become dense in the synchronized pure-state space.

\subsection{Admissible Discrete Coupling Phases}
\begin{definition}[Admissible coupling phases]\label{sec:complex_sym:coupling_phases}
    Define the fourth roots of unity as
        \begin{equation}\label{eqn:fourth_roots}
            \mu_4 \coloneqq \{\pm 1, \pm i\} = \left\{ e^{i\pi m/2}  \given m \in \{0,1,2,3\} \right\}.
        \end{equation}
    We write $\mu_4^0 \coloneqq \{0\} \cup \mu_4$ for the set of admissible weighted adjacency entries.
\end{definition}

We now choose a concrete finite alphabet of allowed coupling weights. The set $\mu_4^0$ should be read as a unit-weight four-phase discretization: $0$ denotes an absent edge, while $1$, $-1$, $i$, and $-i$ denote present edges with phases $0$, $\pi$, $\pi/2$, and $3\pi/2$. This alphabet is small but structurally natural for the Hermitian setting: it is closed under complex conjugation and supplies positive and negative real and imaginary contributions separately. Consequently, row and column sums of such coupling blocks lie in the Gaussian integers $\Z[i]$.
Weighted graphs whose edges carry unit-modulus phases are studied in spectral graph theory as \emph{complex unit gain graphs}, with signed graphs as the two-phase special case \cite{zaslavsky1982signed,reff2012spectral}; in physics, the same objects appear as magnetic hopping operators whose edges carry $U(1)$ Peierls phases \cite{lieb1993fluxes}. In this language, $\mufz$-weighted couplings are gain graphs over the cyclic group $\{\pm 1,\pm i\}\cong\Z_4\subset U(1)$, and the Hermitian model of \Cref{sec:complex_sym:hermitian} is precisely the magnetic case, in which reciprocal edges carry conjugate phases.

The fourth-root alphabet is not the only discrete phase alphabet that could be studied. Among the full root-of-unity alphabets $\mu_p$, however, $\mu_4$ is the smallest that contains a nonreal phase and is closed under both complex conjugation and negation, making it the canonical minimal choice for the signed real-imaginary coordinate construction used here. The real alphabet $\{0,\pm1\}$ is too small, since it forces every coupling row-sum to be real and therefore cannot encode a generic Hermitian coherence. Odd root alphabets such as $\mu_5$ are conjugation-closed but not centrally symmetric: if $\zeta\in\mu_5$, then $-\zeta\notin\mu_5$, so signed cancellation is not available edge-by-edge and row-sum admissibility is governed by cyclotomic relations rather than by independent positive and negative counts. Even alternatives such as $\mu_6$ are possible in principle, but they produce an Eisenstein-integer triangular lattice rather than the Cartesian Gaussian-integer lattice; this would give a different arithmetic theory without strengthening the density result. Larger alphabets similarly add cyclotomic arithmetic and extra exact finite cases, but are unnecessary for obtaining a dense set of exact realizations.

With $\mu_4^0$, every effective coupling has the form $l=c+di\in\Z[i]$. Any row with this sum must contain at least $|c|+|d|$ nonzero entries, because the real and imaginary components require at least $|c|$ and $|d|$ signed unit contributions, respectively. The perfect-matching construction in \Cref{prop:balanced_discrete_coupling} attains this lower bound simultaneously in every row and column. Thus a balanced $q\times q$ coupling realizes $l$ exactly when $|c|+|d|\leq q$. In this precise sense, $\mu_4^0$ is the minimal canonical alphabet for the Gaussian-integer construction developed here: it is closed under Hermitian conjugation, gives independent signed real and imaginary coupling control, yields a simple exact admissibility condition, and produces a Gaussian-rational dense set of target ratios.

\subsection{Balanced Discrete Coupling}
We work throughout in the balanced case $n=m=q$, with $q$ even. This parity assumption removes the usual obstruction for ordinary simple regular graphs: when $q$ is even, $qk$ is even for every $k\in\{0,1,\ldots,q-1\}$, so $q$-vertex $k$-regular diagonal blocks exist for all admissible degrees.
Thus, the only genuinely nontrivial discrete issue is the coupling block: the diagonal degrees can be realized by standard regular graphs, but achieving a prescribed complex effective coupling requires a row- and column-sum construction with entries restricted to $\mu_4^0$.

Necessarily, if a balanced $\mu_4^0$-valued coupling block $C$ has common weighted row and column sum $l=c+di$, then $l\in\Z[i]$ and $|c|+|d|\le q$, since each row sum is a sum of $q$ terms from $\mufz$. The next proposition shows that this necessary condition is also sufficient: it characterizes the discrete set of balanced effective couplings $l$ (and hence conjugate couplings $l^*$) available to the Hermitian QL-bit model, and it gives an explicit realization via disjoint perfect matchings.

\begin{proposition}[Balanced discrete coupling realization via perfect matchings] \label{prop:balanced_discrete_coupling}
    Fix $q\in\N$, and let
        \begin{equation}
            \mathcal{L}_q \coloneqq \{\,c+di\in\Z[i]: |c|+|d|\le q\,\}.
        \end{equation}
    If $l=c+di\in\mathcal{L}_q$, then there exists a matrix $C\in(\mu_4^0)^{q\times q}$ such that
        \[ C\allonesq=l\,\allonesq,\qquad C^T\allonesq=l\,\allonesq. \]
    Equivalently, every $l \in \mathcal{L}_q$ is realizable as a balanced effective coupling.

    \begin{proof}
        Let $P_0,\dots,P_{q-1}$ be permutation matrices arising from a $1$-factorization of the complete bipartite graph $K_{q,q}$ on two parts of size $q$; for example, one may take the cyclic shift matrices on $\Z_q$. Thus the $P_j$ are pairwise disjoint, and each row and each column of each $P_j$ contains exactly one $1$.

        Define
            \[
                C \coloneqq \sign(c)\sum_{j=0}^{|c|-1}P_j
                       + i\,\sign(d)\sum_{j=|c|}^{|c|+|d|-1}P_j \,,
            \]
        where $\sign(0)\coloneqq 0$, and empty sums are understood as zero. Equivalently, assign $|c|$ values of each row and column to $1$, and then $|d|$ values of each row and column to $i$ (modulo a sign). Since $|c|+|d|\le q$, the chosen matchings are disjoint, so every entry of $C$ lies in $\mufz=\mu_4^0$. Moreover, each row and each column meets exactly $|c|$ terms in the first block and exactly $|d|$ terms in the second, hence every row-sum and every column-sum equals
            \[ \sign(c)\,|c| + i\,\sign(d)\,|d| = c+di=l. \]
        Therefore $C\allonesq=l\,\allonesq$ and $C^T\allonesq=l\,\allonesq$, as claimed.
    \end{proof}
\end{proposition}

The $1$-factorization in the proof is not unique: any assignment of $|c|+|d|$ pairwise disjoint perfect matchings to the four signed phase directions realizes the same balanced coupling $l$, while different assignments generally produce different spectra on the complementary subspace $\subspace^\perp$. \Cref{sec:discrete_and_density:worked_example} exploits exactly this freedom.

\smallskip

\noindent\emph{Hermitian implication.}
With $V_A=V_B=\onesqrtq\allonesq$, any such coupling block satisfies
    \[ CV_B=l\,V_A,\qquad C\dagg V_A=l^*V_B, \]
since
    \[ C\dagg V_A = \onesqrtq\,\overline{C^T}\allonesq = \onesqrtq\,l^* \allonesq = l^* V_B. \]
Hence every $l\in\mathcal{L}_q$ yields an exact balanced discrete Hermitian off-diagonal block, and after choosing $A$ and $B$ so that $AV_A=k_A V_A$ and $BV_B=k_B V_B$, one obtains a discrete realization of the Hermitian effective matrix
    \[
        \begin{pmatrix}
            k_A & -l\\
            -l^* & k_B
        \end{pmatrix}.
    \]

Thus \Cref{prop:balanced_discrete_coupling} resolves the coupling-level arithmetic problem. The next corollary is the descent step: it combines this coupling realization with \Cref{thm:hermitian_arbitrary_state} and the existence of regular graphs, showing that admissible effective parameters produce an actual finite Hermitian QL-bit.

\subsection{From Arithmetic Admissibility to Density of Discrete Representation}
We continue to use the ratio parameter
    \[ r \coloneqq \frac{\omega_2}{\omega_1}\in\C^\times, \qquad |\omega_1|^2 + |\omega_2|^2 = 1, \qquad \omega_1\omega_2 \neq 0. \]
Since $\Q(i)$ is dense in $\C$, density will follow once we show that every Gaussian-rational ratio admits an exact balanced discrete Hermitian realization. The next corollary gives the required descent statement from the continuous Hermitian model to the discrete setting; we then verify that every Gaussian-rational ratio satisfies it.

\begin{corollary}[Balanced discrete Hermitian realization under arithmetic admissibility] \label{cor:balanced_discrete_hermitian}
    Fix $r \in\C^\times$, $\lambda\in\R$, and $\delta\in\R^\times$, and define %
        \[
            \tau \coloneqq \frac{\delta}{1+|r|^{-2}}, \qquad
            l \coloneqq \frac{\tau}{r}, \qquad
            k_A \coloneqq \lambda+\tau, \qquad
            k_B \coloneqq \lambda+\frac{\tau}{|r|^2}
        \]
    as in \Cref{thm:hermitian_arbitrary_state}. Suppose there exists an even integer $q$ such that
        \[ l\in\mathcal{L}_q, \qquad k_A,k_B\in\{0,1,\dots,q-1\}. \]
    Then there exist simple $q$-vertex regular graphs with adjacency matrices $A$ and $B$, and a coupling block $C\in(\mu_4^0)^{q\times q}$, such that
        \[
            H \coloneqq
            \begin{pmatrix}
                A & -C\\
                -C\dagg & B
            \end{pmatrix}
        \]
    has an eigenvector in the synchronized subspace $\subspace$ of the form
        \[ \kpsi = \omegvecvavb, \]
    with eigenvalue $\lambda$, and the other eigenvalue of $H|_{\subspace}$ equals $\lambda+\delta$.

    \begin{proof}
        Since $q$ is even and $k_A,k_B\in\{0,1,\dots,q-1\}$, there exist simple $q$-vertex $k_A$- and $k_B$-regular graphs (existence requires $qk$ to be even and trivially $k < q$), so
            \[ AV_A = k_A V_A, \qquad BV_B = k_B V_B. \]
        Since $l\in\mathcal{L}_q$, \Cref{prop:balanced_discrete_coupling} yields a matrix $C\in(\mu_4^0)^{q\times q}$ such that
            \[ C\allonesq = l\,\allonesq, \qquad C^T\allonesq = l\,\allonesq. \]
        Hence, with $V_A=V_B=\onesqrtq\allonesq$,
            \[ CV_B = l\,V_A, \qquad C\dagg V_A = l^*V_B. \]
        The claim now follows from \Cref{thm:hermitian_arbitrary_state}.
    \end{proof}
\end{corollary}

Thus \Cref{cor:balanced_discrete_hermitian} is the exact discretization step for \Cref{thm:hermitian_arbitrary_state}: whenever the continuous Hermitian design parameters land on the discrete arithmetic lattice (determined by $\mathcal{L}_q, k_A,$ and $k_B$), they are realized by an actual finite weighted graph. In particular, every Gaussian-rational ratio is exactly realizable: if $r=z/w\in\Q(i)^\times$ with $z,w\in\Z[i]\setminus\{0\}$, choose $\lambda=0$ and $\tau=|z|^2$ (equivalently $\delta=|z|^2+|w|^2$), so that $l=\tau/r=w z^* \in\Z[i]$, $k_A=|z|^2$, and $k_B=|w|^2$.
Hence \Cref{cor:balanced_discrete_hermitian} applies for every even integer $q$ satisfying
\begin{align*}
    q\geq& \max\left\{ |z|^2+1,\, |w|^2+1,\, |\operatorname{Re}(wz^*)|+ |\operatorname{Im}(wz^*)| \right\}\\
        =& \max\left\{ k_A+1,\, k_B+1,\, |\operatorname{Re}(l)|+ |\operatorname{Im}(l)| \right\} .
\end{align*}
This gives an explicit sufficient size bound for every Gaussian-rational target ratio (it need not be globally minimal).

\medskip

We now show that, allowing the balanced size $q$ to vary and working modulo global phase, the exact balanced discrete Hermitian realizations form a dense subset of the projective synchronized state space.
Any normalized nonbasis synchronized state is determined, up to global phase, by its ratio
    \[ r \coloneqq \frac{\omega_2}{\omega_1}\in\C^\times. \]
Indeed,
    \[ \kpsi= \omegvecvavb = \omega_1 \begin{pmatrix} V_A\\ rV_B \end{pmatrix}. \]
Thus $\binom{V_A}{rV_B}$ is the unnormalized representative of ratio $r$. Since the two blocks are orthogonal and $\|V_A\|=\|V_B\|=1$,
\[
    \left\|
    \begin{pmatrix}
        V_A\\
        rV_B
    \end{pmatrix}
    \right\|^2
    = \|V_A\|^2 + |r|^2\|V_B\|^2
    = 1+|r|^2.
\]
Therefore normalization forces
    \[ 1=\|\kpsi\|^2=|\omega_1|^2(1+|r|^2), \]
so
    \[ |\omega_1|=\frac{1}{\sqrt{1+|r|^2}}. \]
Writing $\omega_1=e^{i\theta}|\omega_1|$, we obtain
    \[ \kpsi = e^{i\theta}\frac{1}{\sqrt{1+|r|^2}}  \begin{pmatrix} V_A\\ rV_B \end{pmatrix} \]
for some $\theta\in\R$.

\begin{corollary}[Density of exact discrete Hermitian realizations] \label{cor:gaussian_rational_dense}
    Allowing the balanced size $q$ to vary, the set of normalized synchronized states exactly realizable by balanced discrete Hermitian QL-bits is dense, modulo global phase, in the projective synchronized state space $\mathbb{CP}^1$.

    \begin{proof}
        For each balanced size $q$, the synchronized subspace has an ordered orthonormal basis given by its two block-supported synchronized modes. We compare states constructed at different sizes through their coefficient vectors in this basis. Thus every normalized nonbasis synchronized state is represented, up to global phase, by
            \begin{equation*}
                \psi_{\mathrm{red}}(r) \coloneqq \frac{1}{\sqrt{1+|r|^2}} \begin{pmatrix} 1\\ r \end{pmatrix}, \qquad r\in\C^\times.
            \end{equation*}
        
        Since $\Q(i)$ is dense in $\C$, there exists a sequence $r_n\in\Q(i)^\times$ with $r_n\to r$. The map $r\mapsto\psi_{\mathrm{red}}(r)$ is continuous, so
            \[ \psi_{\mathrm{red}}(r_n) \longrightarrow \psi_{\mathrm{red}}(r) \]
        in $\C^2$. By the preceding observation, each $r_n$ is exactly realizable by some balanced discrete Hermitian QL-bit, possibly at a different even size $q_n$, and its synchronized eigenstate has coefficient vector $\psi_{\mathrm{red}}(r_n)$ in the corresponding synchronized basis. Hence the exactly realizable nonbasis states are dense under this common two-level representation.
        
        The basis states are exactly realizable as well, by taking $C=0$ (equivalently $l=0$). Hence the full exactly realizable set is dense, modulo global phase, in the projective synchronized state space $\mathbb{CP}^1$.
    \end{proof}

\end{corollary}

\subsection{A Worked Balanced Example: Exact Realization and Spectral Isolation}\label{sec:discrete_and_density:worked_example}

We close this section with a minimal instance of \Cref{cor:balanced_discrete_hermitian} that also makes the isolation hypothesis of \Cref{cor:hermitian_unique_eigenpair} concrete. Take the Gaussian-rational target
    \[ r = 1+i = \frac{z}{w}, \qquad z=1+i,\quad w=1, \]
for which the instantiation following \Cref{cor:balanced_discrete_hermitian} gives $\lambda=0$, $\delta=|z|^2+|w|^2=3$, $k_A=|z|^2=2$, $k_B=|w|^2=1$, and $l=wz^*=1-i$. Since $l\in\mathcal{L}_4$ and $k_A,k_B\in\{0,\dots,3\}$, the minimal admissible balanced size is $q=4$. Let $P$ denote the cyclic shift on $\Z_4$, and define the Fourier vectors
    \[ \chi_j\coloneqq\frac12\left(1,i^{\,j},i^{2j},i^{3j}\right)^T,\qquad P\chi_j=i^j\chi_j,\qquad j\in\{0,1,2,3\}. \]
Choose
    \[ A=P+P^3 \quad (\text{the $4$-cycle}),\qquad B=P^2 \quad (\text{a perfect matching}), \]
which are simple $2$- and $1$-regular graphs. \Cref{prop:balanced_discrete_coupling} permits any two disjoint matchings to carry the phases of $l=1-i$; we compare two admissible choices,
    \[ C^{(1)} = I - iP \qquad\text{and}\qquad C^{(2)} = I - iP^{2}, \]
both of which have every row and column sum equal to $l$.
Because all four blocks are polynomials in $P$, the full operator $H$ is block-diagonalized by these Fourier vectors: mode $j$ contributes the $2\times2$ Hermitian block
    \[
        \begin{pmatrix}
            a_j & -c_j\\
            -\overline{c_j} & b_j
        \end{pmatrix},
        \qquad a_j = 2\cos(\pi j/2), \qquad b_j = (-1)^{j},
    \]
with coupling symbol $c_j = 1 - i\cdot i^{\,j}$ for $C^{(1)}$ and $c_j = 1 - i\,(-1)^{j}$ for $C^{(2)}$. The mode $j=0$ is the synchronized sector: its block is $\bigl(\begin{smallmatrix} 2 & -(1-i)\\ -(1+i) & 1 \end{smallmatrix}\bigr)$, with eigenvalues $\{0,3\}$ and eigenvector proportional to $(1,\,1+i)^{T}$ at the eigenvalue $0$, exactly as designed. \Cref{tab:worked_example_spectra} lists the three complementary modes.

\begin{table}[t]
    \caption{Complementary Fourier modes of the two balanced $q=4$ realizations of the target $r=1+i$ ($\lambda=0$, $\delta=3$). The synchronized mode $j=0$ contributes the designed eigenvalues $\{0,3\}$ in both cases. For $C^{(1)}$ the coupling symbol vanishes at $j=3$, and the decoupled eigenvalue $0$ collides with the designed $\lambda=0$; for $C^{(2)}$ no symbol vanishes and the designed pair is isolated.}
    \label{tab:worked_example_spectra}
    \centering
    \begin{tabular}{cccccc}
        $j$ & $(a_j,b_j)$ & $c_j$ for $C^{(1)}$ & eigenvalues & $c_j$ for $C^{(2)}$ & eigenvalues \\
        \hline
        $1$ & $(0,-1)$ & $2$ & $\tfrac{-1\pm\sqrt{17}}{2}$ & $1+i$ & $1,\ -2$ \\
        $2$ & $(-2,1)$ & $1+i$ & $\tfrac{-1\pm\sqrt{17}}{2}$ & $1-i$ & $\tfrac{-1\pm\sqrt{17}}{2}$ \\
        $3$ & $(0,-1)$ & $0$ & $0,\ -1$ & $1+i$ & $1,\ -2$ \\
    \end{tabular}
\end{table}

For $C^{(1)}$, the two phases cancel on the mode $j=3$: $c_3 = 1 - i\cdot i^{3} = 0$. That mode decouples and contributes the eigenvalues $\{0,-1\}$, so $(\chi_3,\mathbf{0}_4)^{T}$ is an eigenvector of $H$ with eigenvalue $0$ supported entirely on $G_A$. The designed eigenvalue $\lambda=0$ therefore has multiplicity two in $H$, the hypothesis of \Cref{cor:hermitian_unique_eigenpair} fails, and the designed eigenpair, while still present by \Cref{cor:balanced_discrete_hermitian}, is no longer uniquely addressable by its eigenvalue. For $C^{(2)}$ no coupling symbol vanishes; the complementary spectrum is $\bigl\{1,\,1,\,-2,\,-2,\,\tfrac{-1+\sqrt{17}}{2},\,\tfrac{-1-\sqrt{17}}{2}\bigr\}$, disjoint from $\{0,3\}$ (the minimal separations are $1$ and $\tfrac{7-\sqrt{17}}{2}\approx 1.44$), so \Cref{cor:hermitian_unique_eigenpair} applies: both designed eigenvalues are simple and their eigenvectors lie in $\subspace$.

This example shows that the effective parameters $(k_A,k_B,l)$ alone do not determine full spectral isolation. Both coupling blocks realize the same synchronized $2\times2$ operator, but their actions on $\subspace^\perp$ differ: one creates a collision through Fourier-mode cancellation, while the other avoids it. The matching freedom in \Cref{prop:balanced_discrete_coupling} can therefore be used not only to realize the desired effective coupling, but also to shape the complementary spectrum. For less symmetric instances, the condition in \Cref{cor:hermitian_unique_eigenpair} remains a finite spectral check.

\section{Conclusion}

We have characterized the exact synchronized two-level states that can be realized by block-coupled regular graph operators with complex edge weights. The central result is that complex amplitudes are not obtained merely by allowing complex parameters in a symmetric reduced block: under a real synchronized-sector spectrum requirement, complex-symmetric coupling and real symmetric coupling with complex detuning obey the phase constraints $r^2\in\R$ and $r+r\inv\in\R$, respectively. Hermitian conjugate pairing removes these constraints. In the Hermitian model, every nonbasis target ratio $r\in\C^\times$ can be realized with an arbitrary prescribed real eigenvalue and nonzero signed spectral gap, and the synchronized subspace is a reducing subspace of the full operator.

The asymmetric construction shows that universality can also be obtained by independently tuning the two coupling directions, although without the automatic orthogonal decoupling and spectral guarantees supplied by Hermitian symmetry. Finally, the balanced $\mu_4^0=\mufz$ construction gives a strong finite-graph counterpart to the continuous universality result: every Gaussian-rational amplitude ratio has an exact finite weighted-graph realization, and these exact realizations are dense, modulo global phase, in the projective synchronized pure-state space. Thus, the expressive power of complex QL-bits is governed by the structural relation between reciprocal couplings, conjugation symmetry, and arithmetic row-column regularity,
with Hermitian pairing providing the most robust mechanism for universal real-spectrum state synthesis.
\section*{Acknowledgments}

S.K. would like to acknowledge the support of the U.S. Department of Energy (DOE) Office of Basic Energy Sciences, under grant number DE-SC0026309. The authors declare no competing interests. %
\appendix

\section{Structural consequences of the Hermitian construction}\label{app:herm_structural_consequences_proofs}

\begin{proof}[Proof of \Cref{prop:hermitian_closed_subsystem}]
    The invariance of $\subspace$ follows directly from \Cref{prop:general_synchronized_reduction}.

    Now let
        \[ u=\begin{pmatrix}x\\y\end{pmatrix}\in \subspace^\perp, \]
    so that $x\perp V_A$ and $y\perp V_B$. Then
        \[ Hu= \begin{pmatrix} Ax-Cy\\ -C\dagg x+By \end{pmatrix}. \]
    To show that $Hu\in \subspace^\perp$, it suffices to check that its upper component is orthogonal to $V_A$ and its lower component is orthogonal to $V_B$. Using that $A$ and $B$ are Hermitian and that $AV_A=k_A V_A$, $BV_B=k_B V_B$, we compute
    \[
        \braket{V_A}{Ax-Cy}
        = \braket{AV_A}{x} - \braket{V_A}{Cy}
        =  k_A \braket{V_A}{x} - \braket{C\dagg V_A}{y}
        =  k_A \braket{V_A}{x} -l \braket{V_B}{y} %
        = 0,
    \]
    and similarly
    \[
        \braket{V_B}{-C\dagg x+By}
        = -\braket{CV_B}{x} + \braket{BV_B}{y}
        = -l^*\braket{V_A}{x} + k_B\braket{V_B}{y}
        = 0.
    \]

    Hence $Hu\in \subspace^\perp$. Therefore both $\subspace$ and $\subspace^\perp$ are invariant under $H$, and so $H$ is block diagonal with respect to the decomposition
        \[ \C^{\,n+m}=\subspace\oplus \subspace^\perp. \]
\end{proof}

\begin{proof}[Proof of \Cref{cor:hermitian_unique_eigenpair}]
    By \Cref{thm:hermitian_arbitrary_state}, the restriction $H|_\subspace$ has eigenvalues $\lambda$ and $\lambda+\delta$ with $\delta\neq 0$, so these two eigenvalues are distinct.
    
    By \Cref{prop:hermitian_closed_subsystem},
        \[ H = H|_\subspace \oplus H|_{\subspace^\perp}, \]
    and therefore
        \[ \sigma(H)=\sigma(H|_\subspace)\cup \sigma(H|_{\subspace^\perp}). \]
    Hence $\lambda$ and $\lambda+\delta$ are eigenvalues of the full matrix $H$.
    
    Now let $\mu\in\{\lambda,\lambda+\delta\}$, and let $z$ be an eigenvector of $H$ with eigenvalue $\mu$. Write
        \[ z=s+t, \qquad s\in \subspace,\quad t\in \subspace^\perp. \]
    Since the decomposition is block diagonal,
        \[ Hz=H|_\subspace s+H|_{\subspace^\perp} t. \]
    On the other hand,
        \[ Hz=\mu z=\mu s+\mu t. \]
    By uniqueness of the decomposition into $\subspace\oplus \subspace^\perp$, we obtain
        \[ H|_\subspace s=\mu s, \qquad H|_{\subspace^\perp} t=\mu t. \]
    But $\mu\notin \sigma(H|_{\subspace^\perp})$ by assumption, so the second equation forces $t=0$. Thus $z=s\in \subspace$. Therefore every eigenvector of $H$ with eigenvalue $\lambda$ or $\lambda+\delta$ lies entirely in $\subspace$.
    
    Finally, each of the two eigenvalues is simple on the $2\times 2$ block $H|_\subspace$, because they are distinct, and neither appears in $H|_{\subspace^\perp}$. Hence each is simple for the full matrix $H$.
\end{proof} 

\bibliographystyle{ACM-Reference-Format}%
\bibliography{refs}%

@article{dickey2025construction,
    title={Programmable Quantum-Like bits from Signed Regular Graphs},
    author={Dickey, Ethan and Vyas, Abhijeet and Kais, Sabre},
    journal={Royal Society Open Science},
    year={2026},
    note={Accepted for publication},
    eprint={2507.21289},
    archivePrefix={arXiv},
    primaryClass={quant-ph},
    url={https://arxiv.org/abs/2507.21289}
}

@article{pillai2005perron,
    title={The Perron-Frobenius theorem: some of its applications},
    author={Pillai, S Unnikrishna and Suel, Torsten and Cha, Seunghun},
    journal={IEEE Signal Processing Magazine},
    volume={22},
    number={2},
    pages={62--75},
    year={2005},
    publisher={IEEE},
    doi = {10.1109/MSP.2005.1406483} 
}

@article{scholes2024quantumlike,
    author = {Scholes, Gregory D.},
    title = {Quantum-like states on complex synchronized networks},
    journal = {Proceedings of the Royal Society A: Mathematical, Physical and Engineering Sciences},
    volume = {480},
    number = {2295},
    pages = {20240209},
    year = {2024},
    month = {08},
    issn = {1364-5021},
    doi = {10.1098/rspa.2024.0209},
    url = {https://doi.org/10.1098/rspa.2024.0209},
    eprint = {https://royalsocietypublishing.org/rspa/article-pdf/doi/10.1098/rspa.2024.0209/512921/rspa.2024.0209.pdf},
}

@article{amati2025quantumlikebits,
    title = {Quantum information with quantumlike bits},
    author = {Amati, Graziano and Scholes, Gregory D.},
    journal = {Phys. Rev. A},
    volume = {111},
    issue = {6},
    pages = {062203},
    numpages = {15},
    year = {2025},
    month = {Jun},
    publisher = {American Physical Society},
    doi = {10.1103/PhysRevA.111.062203},
    url = {https://link.aps.org/doi/10.1103/PhysRevA.111.062203}
}

@article{scholes2025productstates,
    title = {Quantumlike Product States Constructed from Classical Networks},
    author = {Scholes, Gregory D. and Amati, Graziano},
    journal = {Phys. Rev. Lett.},
    volume = {134},
    issue = {6},
    pages = {060202},
    numpages = {6},
    year = {2025},
    month = {Feb},
    publisher = {American Physical Society},
    doi = {10.1103/PhysRevLett.134.060202},
    url = {https://link.aps.org/doi/10.1103/PhysRevLett.134.060202}
}

@article{bravyi2005universal,
    title = {Universal quantum computation with ideal Clifford gates and noisy ancillas},
    author = {Bravyi, Sergey and Kitaev, Alexei},
    journal = {Phys. Rev. A},
    volume = {71},
    issue = {2},
    pages = {022316},
    numpages = {14},
    year = {2005},
    month = {Feb},
    publisher = {American Physical Society},
    doi = {10.1103/PhysRevA.71.022316},
    url = {https://link.aps.org/doi/10.1103/PhysRevA.71.022316}
}

@article{howard2014contextuality,
    author    = {Mark Howard and Joel Wallman and Victor Veitch and Joseph Emerson},
    title     = {Contextuality supplies the ‘magic’ for quantum computation},
    journal   = {Nature},
    year      = {2014},
    volume    = {510},
    number    = {7505},
    pages     = {351--355},
    month     = jun,
    doi = {10.1038/nature13460},
}

@article{veitch2014resource,
    title = {The resource theory of stabilizer quantum computation},
    author = {Veitch, Victor and Hamed Mousavian, S A and Gottesman, Daniel and Emerson, Joseph},
    doi = {10.1088/1367-2630/16/1/013009},
    url = {https://doi.org/10.1088/1367-2630/16/1/013009},
    year = {2014},
    month = {jan},
    publisher = {IOP Publishing},
    volume = {16},
    number = {1},
    pages = {013009},
    journal = {New Journal of Physics},
}

@article{bravyi2012lowoverhead,
    title = {Magic-state distillation with low overhead},
    author = {Bravyi, Sergey and Haah, Jeongwan},
    journal = {Phys. Rev. A},
    volume = {86},
    issue = {5},
    pages = {052329},
    numpages = {10},
    year = {2012},
    month = {Nov},
    publisher = {American Physical Society},
    doi = {10.1103/PhysRevA.86.052329},
    url = {https://link.aps.org/doi/10.1103/PhysRevA.86.052329}
}

@article{jones2013multilevel,
    title = {Multilevel distillation of magic states for quantum computing},
    author = {Jones, Cody},
    journal = {Phys. Rev. A},
    volume = {87},
    issue = {4},
    pages = {042305},
    numpages = {8},
    year = {2013},
    month = {Apr},
    publisher = {American Physical Society},
    doi = {10.1103/PhysRevA.87.042305},
    url = {https://link.aps.org/doi/10.1103/PhysRevA.87.042305}
}

@article{campbell2017roads,
    author  = {Campbell, Earl T. and Terhal, Barbara M. and Vuillot, Christophe},
    title   = {Roads towards fault-tolerant universal quantum computation},
    journal = {Nature},
    year    = {2017},
    volume  = {549},
    number  = {7671},
    pages   = {172--179},
    month   = sep,
    doi     = {10.1038/nature23460},
    url     = {https://doi.org/10.1038/nature23460},
    issn    = {1476-4687}
}

@article{seddon2019quantifying,
    author = {Seddon, James R. and Campbell, Earl T.},
    title = {Quantifying magic for multi-qubit operations},
    journal = {Proceedings of the Royal Society A: Mathematical, Physical and Engineering Sciences},
    volume = {475},
    number = {2227},
    pages = {20190251},
    year = {2019},
    month = {07},
    issn = {1364-5021},
    doi = {10.1098/rspa.2019.0251},
    url = {https://doi.org/10.1098/rspa.2019.0251},
    eprint = {https://royalsocietypublishing.org/rspa/article-pdf/doi/10.1098/rspa.2019.0251/1221292/rspa.2019.0251.pdf},
}

@article{veitch2013efficient,
    doi = {10.1088/1367-2630/15/1/013037},
    url = {https://doi.org/10.1088/1367-2630/15/1/013037},
    year = {2013},
    month = {jan},
    publisher = {IOP Publishing},
    volume = {15},
    number = {1},
    pages = {013037},
    author = {Veitch, Victor and Wiebe, Nathan and Ferrie, Christopher and Emerson, Joseph},
    title = {Efficient simulation scheme for a class of quantum optics experiments with non-negative Wigner representation},
    journal = {New Journal of Physics},
}

@article{bravyi2016trading,
    title = {Trading Classical and Quantum Computational Resources},
    author = {Bravyi, Sergey and Smith, Graeme and Smolin, John A.},
    journal = {Phys. Rev. X},
    volume = {6},
    issue = {2},
    pages = {021043},
    numpages = {14},
    year = {2016},
    month = {Jun},
    publisher = {American Physical Society},
    doi = {10.1103/PhysRevX.6.021043},
    url = {https://link.aps.org/doi/10.1103/PhysRevX.6.021043}
}

@article{farhi1998quantum,
    title = {Quantum computation and decision trees},
    author = {Farhi, Edward and Gutmann, Sam},
    journal = {Phys. Rev. A},
    volume = {58},
    issue = {2},
    pages = {915--928},
    numpages = {0},
    year = {1998},
    month = {Aug},
    publisher = {American Physical Society},
    doi = {10.1103/PhysRevA.58.915},
    url = {https://link.aps.org/doi/10.1103/PhysRevA.58.915}
}

@article{childs2009universal,
    title = {Universal Computation by Quantum Walk},
    author = {Childs, Andrew M.},
    journal = {Phys. Rev. Lett.},
    volume = {102},
    issue = {18},
    pages = {180501},
    numpages = {4},
    year = {2009},
    month = {May},
    publisher = {American Physical Society},
    doi = {10.1103/PhysRevLett.102.180501},
    url = {https://link.aps.org/doi/10.1103/PhysRevLett.102.180501}
}

@article{bose2003communication,
    title = {Quantum Communication through an Unmodulated Spin Chain},
    author = {Bose, Sougato},
    journal = {Phys. Rev. Lett.},
    volume = {91},
    issue = {20},
    pages = {207901},
    numpages = {4},
    year = {2003},
    month = {Nov},
    publisher = {American Physical Society},
    doi = {10.1103/PhysRevLett.91.207901},
    url = {https://link.aps.org/doi/10.1103/PhysRevLett.91.207901}
}

@article{christandl2004perfect,
    title = {Perfect State Transfer in Quantum Spin Networks},
    author = {Christandl, Matthias and Datta, Nilanjana and Ekert, Artur and Landahl, Andrew J.},
    journal = {Phys. Rev. Lett.},
    volume = {92},
    issue = {18},
    pages = {187902},
    numpages = {4},
    year = {2004},
    month = {May},
    publisher = {American Physical Society},
    doi = {10.1103/PhysRevLett.92.187902},
    url = {https://link.aps.org/doi/10.1103/PhysRevLett.92.187902}
}

@article{christandl2005perfect,
    title = {Perfect transfer of arbitrary states in quantum spin networks},
    author = {Christandl, Matthias and Datta, Nilanjana and Dorlas, Tony C. and Ekert, Artur and Kay, Alastair and Landahl, Andrew J.},
    journal = {Phys. Rev. A},
    volume = {71},
    issue = {3},
    pages = {032312},
    numpages = {11},
    year = {2005},
    month = {Mar},
    publisher = {American Physical Society},
    doi = {10.1103/PhysRevA.71.032312},
    url = {https://link.aps.org/doi/10.1103/PhysRevA.71.032312}
}

@article{kempe2003quantum,
    author = {Julia Kempe},
    title = {Quantum random walks: An introductory overview},
    journal = {Contemporary Physics},
    volume = {44},
    number = {4},
    pages = {307--327},
    year = {2003},
    publisher = {Taylor \& Francis},
    doi = {10.1080/00107151031000110776},
    URL = {https://doi.org/10.1080/00107151031000110776},
    eprint = {https://doi.org/10.1080/00107151031000110776}
}

@article{krovi2007quantum,
    title = {Quantum walks on quotient graphs},
    author = {Krovi, Hari and Brun, Todd A.},
    journal = {Phys. Rev. A},
    volume = {75},
    issue = {6},
    pages = {062332},
    numpages = {14},
    year = {2007},
    month = {Jun},
    publisher = {American Physical Society},
    doi = {10.1103/PhysRevA.75.062332},
    url = {https://link.aps.org/doi/10.1103/PhysRevA.75.062332}
}

@article{godsil2012state,
    author = {Chris Godsil},
    title = {State transfer on graphs},
    journal = {Discrete Mathematics},
    volume = {312},
    number = {1},
    pages = {129-147},
    year = {2012},
    note = {Algebraic Graph Theory — A Volume Dedicated to Gert Sabidussi on the Occasion of His 80th Birthday},
    issn = {0012-365X},
    doi = {https://doi.org/10.1016/j.disc.2011.06.032},
    url = {https://www.sciencedirect.com/science/article/pii/S0012365X11002974},
}

@article{bachman2012perfect,
    author = {Rachel Bachman and Eric Fredette and Jessica Fuller and Michael Landry and Michael C. Opperman and Christino Tamon and Andrew Tollefson},
    title = {Perfect state transfer on quotient graphs},
    journal = {Quantum Information \& Computation},
    volume = {12},
    number = {3-4},
    pages = {293--313},
    year = {2012},
    note = {Published version URL unavailable; arXiv: \url{https://arxiv.org/abs/1108.0339}}
}

@article{aguiar2018synchronization,
    author = {Aguiar, Manuela A. D. and Dias, Ana Paula S.},
    title = {Synchronization and equitable partitions in weighted networks},
    journal = {Chaos: An Interdisciplinary Journal of Nonlinear Science},
    volume = {28},
    number = {7},
    pages = {073105},
    year = {2018},
    month = {07},
    issn = {1054-1500},
    doi = {10.1063/1.4997385},
    url = {https://doi.org/10.1063/1.4997385}
}

@article{chu2002structured,
    author={Chu, Moody T. and Golub, Gene H.},
    title={Structured inverse eigenvalue problems},
    volume={11},
    DOI={10.1017/S0962492902000016},
    journal={Acta Numerica},
    year={2002},
    pages={1–71}
}

@article{bender1998real,
    title = {Real Spectra in Non-Hermitian Hamiltonians Having $\mathcal{P}\mathcal{T}$ Symmetry},
    author = {Bender, Carl M. and Boettcher, Stefan},
    journal = {Phys. Rev. Lett.},
    volume = {80},
    issue = {24},
    pages = {5243--5246},
    numpages = {0},
    year = {1998},
    month = {Jun},
    publisher = {American Physical Society},
    doi = {10.1103/PhysRevLett.80.5243},
    url = {https://link.aps.org/doi/10.1103/PhysRevLett.80.5243}
}

@article{bender2002complex,
    title = {Complex Extension of Quantum Mechanics},
    author = {Bender, Carl M. and Brody, Dorje C. and Jones, Hugh F.},
    journal = {Phys. Rev. Lett.},
    volume = {89},
    issue = {27},
    pages = {270401},
    numpages = {4},
    year = {2002},
    month = {Dec},
    publisher = {American Physical Society},
    doi = {10.1103/PhysRevLett.89.270401},
    url = {https://link.aps.org/doi/10.1103/PhysRevLett.89.270401}
}

@article{mostafazadeh2002pseudo,
    author = {Mostafazadeh, Ali},
    title = {Pseudo-Hermiticity versus PT symmetry: The necessary condition for the reality of the spectrum of a non-Hermitian Hamiltonian},
    journal = {Journal of Mathematical Physics},
    volume = {43},
    number = {1},
    pages = {205-214},
    year = {2002},
    month = {01},
    issn = {0022-2488},
    doi = {10.1063/1.1418246},
    url = {https://doi.org/10.1063/1.1418246}
}

@article{elganainy2018nonhermitian,
    author  = {El-Ganainy, Ramy and Makris, Konstantinos G. and Khajavikhan, Mercedeh and Musslimani, Ziad H. and Rotter, Stefan and Christodoulides, Demetrios N.},
    title   = {Non-Hermitian physics and PT symmetry},
    journal = {Nature Physics},
    year    = {2018},
    volume  = {14},
    number  = {1},
    pages   = {11--19},
    doi     = {10.1038/nphys4323},
    url     = {https://doi.org/10.1038/nphys4323},
    issn    = {1745-2481},
}

@article{ashida2020nonhermitian,
    author = {Yuto Ashida and Zongping Gong and Masahito Ueda},
    title = {Non-Hermitian physics},
    journal = {Advances in Physics},
    volume = {69},
    number = {3},
    pages = {249--435},
    year = {2020},
    publisher = {Taylor \& Francis},
    doi = {10.1080/00018732.2021.1876991},
    URL = {https://doi.org/10.1080/00018732.2021.1876991},
    eprint = {https://doi.org/10.1080/00018732.2021.1876991}
}

@article{bergholtz2021exceptional,
    title = {Exceptional topology of non-Hermitian systems},
    author = {Bergholtz, Emil J. and Budich, Jan Carl and Kunst, Flore K.},
    journal = {Rev. Mod. Phys.},
    volume = {93},
    issue = {1},
    pages = {015005},
    numpages = {31},
    year = {2021},
    month = {Feb},
    publisher = {American Physical Society},
    doi = {10.1103/RevModPhys.93.015005},
    url = {https://link.aps.org/doi/10.1103/RevModPhys.93.015005}
}

@book{quarteroni2014reduced,
    editor    = {Alfio Quarteroni and Gianluigi Rozza},
    title     = {Reduced Order Methods for Modeling and Computational Reduction},
    series    = {MS\&A -- Modeling, Simulation and Applications},
    volume    = {9},
    year      = {2014},
    publisher = {Springer International Publishing},
    address   = {Cham},
    doi       = {10.1007/978-3-319-02090-7},
    isbn      = {978-3-319-02089-1}
}

@article{benner2015survey,
    author = {Benner, Peter and Gugercin, Serkan and Willcox, Karen},
    title = {A Survey of Projection-Based Model Reduction Methods for Parametric Dynamical Systems},
    journal = {SIAM Review},
    volume = {57},
    number = {4},
    pages = {483-531},
    year = {2015},
    doi = {10.1137/130932715},
    URL = {https://doi.org/10.1137/130932715},
    eprint = {https://doi.org/10.1137/130932715}
}

@ARTICLE{shuman2013emerging,
    author={Shuman, David I and Narang, Sunil K. and Frossard, Pascal and Ortega, Antonio and Vandergheynst, Pierre},
    journal={IEEE Signal Processing Magazine}, 
    title={The emerging field of signal processing on graphs: Extending high-dimensional data analysis to networks and other irregular domains}, 
    year={2013},
    volume={30},
    number={3},
    pages={83-98},
    doi={10.1109/MSP.2012.2235192}
}

@ARTICLE{ortega2018graph,
    author={Ortega, Antonio and Frossard, Pascal and Kovačević, Jelena and Moura, José M. F. and Vandergheynst, Pierre},
    journal={Proceedings of the IEEE}, 
    title={Graph Signal Processing: Overview, Challenges, and Applications}, 
    year={2018},
    volume={106},
    number={5},
    pages={808-828},
    doi={10.1109/JPROC.2018.2820126}
}

@article{zaslavsky1982signed,
    title = {Signed graphs},
    journal = {Discrete Applied Mathematics},
    volume = {4},
    number = {1},
    pages = {47-74},
    year = {1982},
    issn = {0166-218X},
    doi = {https://doi.org/10.1016/0166-218X(82)90033-6},
    url = {https://www.sciencedirect.com/science/article/pii/0166218X82900336},
    author = {Thomas Zaslavsky}
}

@article{reff2012spectral,
    title = {Spectral properties of complex unit gain graphs},
    journal = {Linear Algebra and its Applications},
    volume = {436},
    number = {9},
    pages = {3165-3176},
    year = {2012},
    issn = {0024-3795},
    doi = {https://doi.org/10.1016/j.laa.2011.10.021},
    url = {https://www.sciencedirect.com/science/article/pii/S002437951100718X},
    author = {Nathan Reff}
}

@article{lieb1993fluxes,
    author = {Lieb, Elliott H. and Loss, Michael},
    title = {Fluxes, {L}aplacians, and {K}asteleyn's theorem},
    journal = {Duke Mathematical Journal},
    volume = {71},
    number = {2},
    year = {1993},
    pages = {337--363},
    doi = {10.1215/S0012-7094-93-07114-1}
}

\end{document}